\newcommand\semihuge{\@setfontsize\semihuge{22.3}{22}}
\newtheorem{theorem}{\bf Theorem}
\newtheorem{proposition}{\bf Proposition}
\newenvironment{MyColorPar}[1]{%
	\leavevmode\color{#1}\ignorespaces%
}{%
}%
\begin{document}
	
\title{\semihuge Mobile Unmanned Aerial Vehicles (UAVs) for Energy-Efficient Internet of Things Communications}    
\author{\IEEEauthorblockN{ \normalsize Mohammad Mozaffari$^1$, Walid Saad$^1$, Mehdi Bennis$^2$, and M\'erouane Debbah$^3$}\vspace{-0.05cm}\\
	\IEEEauthorblockA{
		\small $^1$ Wireless@VT, Electrical and Computer Engineering Department, Virginia Tech, VA, USA,\\ Emails:\url{{mmozaff , walids}@vt.edu}.\\
		$^2$ CWC - Centre for Wireless Communications, Oulu, Finland, Email: \url{bennis@ee.oulu.fi}.\\
		$^3$ Mathematical and Algorithmic Sciences Lab, Huawei France R \& D, Paris, France, and CentraleSup´elec,\\   Universit\'e Paris-Saclay, Gif-sur-Yvette, France, Email: \url{merouane.debbah@huawei.com}.
	}\vspace{-0.9cm}}
\maketitle\vspace{-0.7cm}
\vspace{-0.1cm}
\begin{abstract}\vspace{-0.2cm}
In this paper, the efficient deployment and mobility of multiple unmanned aerial vehicles (UAVs), used as aerial base stations to collect data from ground Internet of Things (IoT) devices, is investigated. In particular, to enable reliable uplink communications for IoT devices with a minimum total transmit power, a novel framework is proposed for jointly optimizing the three-dimensional (3D) placement and mobility of the UAVs, device-UAV association, and uplink power control. First, given the locations of active IoT devices at each time instant, the optimal UAVs' locations and associations are determined. \textcolor{black}{Next, to dynamically serve the IoT devices in a time-varying network, the optimal mobility patterns of the UAVs are analyzed. To this end, based on the activation process of the IoT devices, the time instances at which the UAVs must update their locations are derived. Moreover, the optimal 3D trajectory of each UAV is obtained in a way that the total energy used for the mobility of the UAVs is minimized while serving the IoT devices. Simulation results show that, using the proposed approach, the total transmit power of the IoT devices is reduced by 45\% compared to a case in which stationary aerial base stations are deployed. In addition, the proposed approach can yield a maximum of 28\% enhanced system reliability compared to the stationary case.  The results also reveal}  an inherent tradeoff between the number of update times, the mobility of the UAVs, and the transmit power of the IoT devices. In essence, a higher number of updates can lead to lower transmit powers for the IoT devices at the cost of an increased mobility for the UAVs.\vspace{-0.45cm}

 
\end{abstract} \vspace{-0.04cm}
\section{Introduction}\vspace{-0.1cm}

The use of unmanned aerial vehicles (UAVs) as flying wireless communication platforms has received significant attention recently  \cite{mozaffari2, zhang, Jiang,zengThroughput, MozaffariIoT,Irem, LetterNew, HouraniOptimal}. 
On the one hand, UAVs can be used as wireless relays for improving connectivity and coverage of ground wireless devices. On the other hand, UAVs can act as mobile aerial base stations to provide reliable downlink and uplink communications for ground users and boost the capacity of wireless networks \cite{HouraniOptimal, Irem, mozaffari2,Letter, zhang,VishnuJ,LetterNew,Mingzhe, Hover}. Compared to the terrestrial base stations, the advantage of using UAV-based aerial base stations is their ability to provide on-the-fly communications. \textcolor{black}{Furthermore, the high altitude of UAVs enables them to effectively establish line-of-sight (LoS) communication links thus mitigating signal blockage and shadowing.} 
Due to their adjustable altitude and mobility, UAVs can move towards potential ground users and establish reliable connections with a low transmit power. Hence, they can provide a cost-effective and energy-efficient solution to collect data from ground mobile users that are spread over a geographical area with limited terrestrial infrastructure. 

Indeed, UAVs can play a key role in the \emph{Internet of Things (IoT)} which is composed of small, battery-limited devices such as  
sensors, and health monitors \cite{lien}. These devices are typically unable to transmit over a long distance due to their energy constraints \cite{lien}. In such IoT scenarios, UAVs can dynamically move towards IoT devices, collect the IoT data, and transmit it to other devices which are out of the communication ranges of the transmitters \cite{lien}. In this case, the UAVs play the role of moving aggregators or base stations for IoT networks \cite{MozaffariIoT}. However, to effectively use UAVs for the IoT, several challenges must be addressed such as optimal deployment, mobility and energy-efficient use of UAVs as outlined in \cite{Irem} and \cite{mozaffari2}.

\textcolor{black}{In \cite{Jiang}, the authors investigated the optimal trajectory of UAVs equipped with multiple antennas for maximizing sum-rate in uplink communications. The work in \cite{zengThroughput} maximizes the throughput of a relay-based UAV system by jointly optimizing the UAV's trajectory as well as the source/relay transmit power.} However these works considered a single UAV in their models. In \cite {mozaffari2}, we investigated the optimal deployment and movement of a single UAV for supporting downlink wireless communications.  
\textcolor{black}{The work in \cite{LyuPlacement} proposed a low-complexity algorithm for the optimal deployment of multiple UAVs that provide coverage for ground users.}
The work in \cite{VishnuJ} provided  a comprehensive downlink coverage analysis for a network in which a finite number of UAVs serve the ground users. 
 In \cite{pang}, the authors used UAVs to efficiently collect data and recharge the clusters' head in a wireless sensor network which is partitioned into multiple clusters. However, this work is limited to a static sensor network, and does not investigate the optimal deployment of the UAVs. While the energy efficiency of uplink data transmission in a machine-to-machine (M2M) communication network was investigated \cite{Tu}, the presence of UAVs was not considered. 
In fact, none of the prior studies in \cite{MozaffariIoT,Irem, Jiang,zengThroughput, mozaffari2, lien, zhang, HouraniOptimal, Letter, pang, Tu,VishnuJ,LyuPlacement,LetterNew,Mingzhe,Hover}, addressed the problem of jointly optimizing the deployment and mobility of UAVs, device association, and uplink power control for enabling reliable and  energy-efficient communications for IoT devices. To our best knowledge, this paper is \textcolor{black}{\emph{\textcolor{black}{one of the first} comprehensive studies on the joint optimal 3D deployment of aerial base stations, device association, and uplink power control in an IoT ecosystem}.} 



The main contribution of this paper is to introduce a novel framework for optimized deployment and mobility of multiple UAVs for the purpose of energy-efficient uplink data collection from ground IoT devices. In particular, we consider an IoT network in which the IoT devices can be active at different time instances. To minimize the total transmit power of these IoT devices, given device-specific signal-to-interference-plus-noise-ratio (SINR) constraints, we propose an efficient approach to jointly and dynamically find the UAVs' locations, the association of devices to UAVs, and the optimal uplink transmit power. Our proposed framework is composed of two key steps. First, given the locations of the IoT devices, we propose a solution for optimizing the deployment and association of the UAVs. In this case, we solve the formulated problem by decomposing it into two subproblems which are solved iteratively. In the first subproblem, given the fixed UAVs' locations, we find the jointly optimal device-UAV association and the devices' transmit power. In the second subproblem, given the fixed device association, we determine the joint 3D UAVs' locations. For this subproblem, we transform the non-convex continuous location optimization problem to a convex form and provide tractable solutions. Next, following our proposed algorithm, the results of solving the second subproblem are used as inputs to the first subproblem for the next iteration. Here, we show that our proposed approach leads to an efficient solution with a reasonable accuracy compared to the global optimal solution that requires significant overhead. Clearly, the UAVs' locations and the device association that we obtain in this first step will depend on the locations of active IoT devices.    

In the second step, we analyze the IoT network over a time period during which the set of active devices changes. In this case, we present a framework for optimizing the UAVs' mobility by allowing them dynamically update their locations depending on the time-varying devices' activation process. First, we derive the closed-form expressions for the time instances (update times) at which the UAVs must move according to the activation process of the devices. Next, using the update time results, we derive the optimal 3D UAVs' trajectory such that the total movement of the UAVs while updating their locations is minimized. \textcolor{black}{Our simulation results show that, using the proposed approach, the total transmit power of the IoT devices can be significantly reduced compared to a case in which stationary aerial base stations are deployed.}  The results also verify our analytical derivations for the update times and reveal an inherent tradeoff between the number of updates, the mobility of the UAVs, and transmit power of the IoT devices. In particular, it is shown that a higher number of updates leads to lower transmit powers for the IoT devices at the cost of higher UAVs' energy consumptions.    


The rest of this paper is organized as follows. In Section II, we present the system model and problem formulation. Section III presents the optimal deployment of UAVs and device association. In Section IV, we address the mobility and update time of the UAVs. In Section V we provide the simulation and analytical results, and Section VI draws some conclusions.\vspace{-0.2cm}

\section{System Model and Problem Formulation}

Consider an IoT system consisting of a set $\mathcal{L}=\{1,2,...,L\}$ of $L$ IoT devices. Examples of such devices include various types of sensors used for environmental monitoring, smart traffic control, and  smart parking devices. In this system, a set $\mathcal{K}=\{1,2,...,K\}$ of $K$ \textcolor{black}{rotary wing UAVs} must be deployed to collect the data from the ground IoT devices. These UAVs can dynamically move, when needed, to effectively serve the IoT devices using uplink communication links. \textcolor{black}{Here, the term \textit{served} by a UAV implies that the uplink SINR is above the threshold and, thus, the UAV can successfully collect data from the ground IoT device.} In our model, we assume that the devices transmit their data to the UAVs in the uplink using frequency division multiple access (FDMA) over $R$ orthogonal channels. Let $E_\textrm{max}$ be the maximum energy that each UAV can spend on its movement. The locations of device $i\in \mathcal{L}$ and UAV $j\in \mathcal{K}$ are, respectively, given by $(x_i,y_i)$ and $\boldsymbol{v}_j=(x^\textrm{uav} _{j},y^\textrm{uav} _{j},h_j)$ as shown in Fig. \ref{System}. In our model, we consider a centralized network in which the locations of the devices and UAVs are known to a control center located at a
central \emph{cloud} server. The cloud server will determine the UAVs' locations, device association, and the transmit power of each IoT device.

We analyze the IoT network within a time interval $[0,T]$ during which the IoT devices can be active at different time instances and must be served by the UAVs at some pre-defined time slots. At the beginning of each slot, the positions of the UAVs as well as the device-UAV association are updated based on the locations of currently active devices \textcolor{black}{that are assumed to be known to the cloud center\footnote {We consider static IoT devices in delay-tolerant applications which their fixed locations and activation patterns are known to the cloud center.}.} Hereinafter, the time instance at which the UAVs' locations and associations are \emph{jointly} updated, is referred to as the \emph{update time}. The update times are denoted by $t_n$, $1\leq n\leq N$, with $N$ being the number of updates.  
At each update time $t_n$, based on the location of active devices, the optimal UAVs' locations and the corresponding association must be determined for effectively serving the ground devices. Here, the IoT devices that become active during $[{t_{n-1}},{t_{n}})$ are served by the UAVs during the time period $[{t_{n}},{t_{n+1}})$. Note that, during $[{t_{n-1}},{t_{n}})$, the UAVs' locations and their device association do not change until the next update time, $t_{n}$.  
Clearly, since at different update times, a different subset of devices might be active, the locations of the UAVs must dynamically change at each update time. Therefore, each UAV's \emph{trajectory} will consist of $N$ stop locations at which the UAV serves the ground devices. \textcolor{black}{Note that, in our
	model, the UAVs’ locations are not necessarily updated once the set of active devices changes.
	Instead, we consider some specific time instances (update times) at which the UAVs locations
	device associations, and devices' transmit power are optimized. In particular, considering the fact that
	the set of active devices may continuously change, continuously updating the UAVs' locations,
	the devices transmit powers, and the device-UAV associations may not be feasible as it can lead to low reliability, high UAVs' energy consumption, and a need to solve complex real-time optimization processes. In
	our model, the update times are design parameters that depend on the activity of the devices,
	and the energy of UAVs.} Given this model, our objective is to find the optimal joint UAVs' locations and device association at each update time $t_n$ so as to minimize the total transmit power of the active devices while meeting each device's SINR requirement. Moreover, we need to develop a framework for determining the update times as well as the UAVs' mobility to handle dynamic changes in the activity of the devices. 
 To this end, first, we present the ground-to-air channel model and the activation models for the IoT devices. \vspace{-0.2cm}
 
 \begin{figure}[!t]
 	\begin{center}
 		\vspace{-0.2cm}
 		\includegraphics[width=8.3cm]{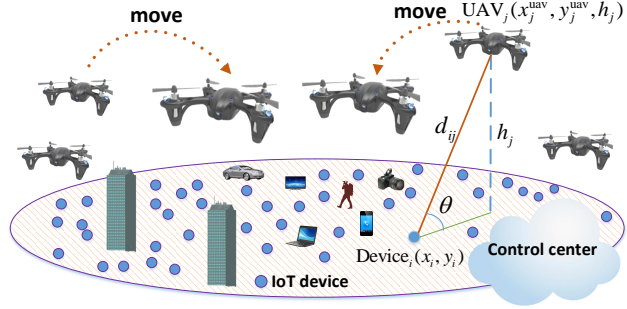}
 		\vspace{-0.9cm}
 		\caption{ \small System model.\vspace{-1.2cm}}
 		\label{System}
 	\end{center}
 \end{figure}  
 
 \subsection{Ground-to-Air Path Loss Model}

\textcolor{black}{In our model, while optimizing the locations of the UAVs, the information available includes the ground devices' locations, and the type of environment (e.g. rural, suburban, urban, highrise urban, etc.). Note that, in such practical scenarios, one will not have any additional information about the exact locations, heights, and number of the obstacles. Therefore, one must consider the randomness associated with the LoS and NLoS links while designing the UAV-based communication system.}
 Therefore, for  ground-to-air communications, each device will typically have a LoS view towards a specific UAV with a given probability. This LoS probability depends on the environment, location of the device and the UAV as well as the elevation angle \cite{HouraniOptimal}. One suitable expression for the LoS probability is given by \cite{Irem,mozaffari2,HouraniOptimal}:\vspace{-0.09cm}
\begin{equation}\label{PLoS}
	{P^{ij}_{{\rm{LoS}}}} = \frac{1}{{1 + \psi \exp ( - \beta\left[ {\theta_{ij}  - \psi} \right])}},
\end{equation}
where $\psi$  and $\beta$  are constant values which depend on the carrier frequency and type of environment such as rural, urban, or dense urban, and $\theta_{ij}$  is the elevation angle. Clearly, ${\theta} = \frac{{180}}{\pi } \times {\sin ^{ - 1}}\left( {{\textstyle{{{h_j}} \over { {d_{ij}}}}}} \right)$, where $ {d_{ij}} = \sqrt {(x_i-x^\textrm{uav}_j)^2+(y_i-y^\textrm{uav}_j)^2+h_j^2 }$ is the distance between device $i$ and UAV $j$. 

From (\ref{PLoS}), we can see that by increasing the elevation angle or increasing the UAV altitude, the LoS probability increases. The path loss model for LoS and  non-line-of-sight (NLoS) links between device $i$ and UAV $j$ is given by \cite{Irem} and \cite{HouraniOptimal}:
\begin{equation}\label{pathloss}
{L_{ij}} = \left\{ \begin{array}{l}
{\eta_1 \left( {\frac{{4\pi {f_c}{d_{ij}}}}{c}} \right)^\alpha},{\rm{{\rm \hspace*{1.3cm}{LoS\hspace*{.2cm} link}}}},\\
\eta_2 {\left( {\frac{{4\pi {f_c}{d_{ij}}}}{c}} \right)^\alpha },{\rm \hspace*{1.3cm} {   NLoS \hspace*{.2cm}link}},
\end{array} \right.
\end{equation}
where $f_c$ is the carrier frequency, $\alpha$ is the path loss exponent, $\eta_1$ and $\eta_2$ ($\eta_2>\eta_1>1$) are the excessive path loss coefficients in LoS and NLoS cases, and $c$ is the speed of light. Note that, the NLoS probability is $P^{ij}_{\text{NLoS}}=1-P^{ij}_{\text{LoS}}$.
Typically, given only the locations of the UAVs and devices, it is not possible to exactly determine which path loss type (LoS/NLoS) is experienced by the device-UAV link. In this case, the path loss average considering both LoS and NLoS links can be used for the device-UAV communications \cite{Irem} and \cite{HouraniOptimal}. Now, using (\ref{PLoS}) and (\ref{pathloss}), the average path loss between device $i$ and UAV $j$ can be expressed as:
\begin{equation}\label{L_ave}
\bar L_{ij} = {P^{ij}_{{\rm{LoS}}}}\eta_1 \left( {\frac{{4\pi {f_c}{d_{ij}}}}{c}} \right)^{\alpha} +  {P^{ij}_{{\rm{NLoS}}}}\eta_2 \left( {\frac{{4\pi {f_c}{d_{ij}}}}{c}}\right)^{\alpha}=\left[{P^{ij}_{{\rm{LoS}}}}\eta_1+{P^{ij}_{{\rm{NLoS}}}}\eta_2\right] \left( K_o d_{ij} \right)^{\alpha},
\end{equation} 
where $K_o=\frac{{4\pi {f_c}}}{c}$. Clearly, the average channel gain between the UAV and the device is $\bar g_{ij}=~\frac{1}{\bar L_{ij}}$. \textcolor{black}{ Note that, by using the average channel gain,  there is no need to account for LoS and NLoS links separately, and, hence, the SINR expressions become more tractable. Therefore, we use the average channel gain to model the interference and desired links for all device-UAV communications while computing the SINRs.\vspace{-0.4cm}}

 \subsection{IoT Device Activation Model}
Indeed, the activation of IoT devices depends on the services that they are supporting. For instance, in some applications such as weather monitoring, smart grids, or home automation, the IoT devices need to report their data periodically. However, the IoT devices can have random activations in health monitoring, or smart traffic control applications. Therefore, the UAVs must be properly deployed to collect the IoT devices data while dynamically adapting to the activity patterns of these devices. 
Naturally, the optimal locations of the UAVs and their update times depend on the activation process of the IoT devices. Here, we consider two activation models. In the first model, the IoT devices are randomly activated, as in smart traffic control applications. \textcolor{black}{In this case, the concurrent transmissions of a massive number of devices within a short time duration can lead to a bursty traffic as pointed out in \cite{chen} and \cite{Tavana}. In fact, when massive IoT devices attempt to transmit within a short time period, the arrival patterns become more bursty \cite{Jian}. Thus, 3GPP suggests a \textit{beta} distribution to capture this traffic characteristic of IoT devices \cite{3GPP}. In this case, each IoT device will be active at time $t\in[0,T]$ following the beta distribution with parameters $\kappa$ and $\omega$ \cite{3GPP,Jian,Tavana}:\vspace{-0.2cm}}
\begin{equation}\label{Beta}
f(t) = \frac{{{t^{\kappa  - 1}}{{(T - t)}^{\omega  - 1}}}}{{{T^{\kappa   + \omega  - 1}}B(\kappa  ,\omega )}},
\end{equation}
where $[0,T]$ is the time interval within which the IoT devices can be active, and $B(\kappa ,\omega ) = \int_0^1 {{t^{\kappa  - 1}}{{(1 - t)}^{\omega - 1}}{\rm{d}}t}$ is the beta function with parameters $\kappa$ and $\omega$ \cite{gupta}.

In addition, IoT devices such as smart meters typically report their data periodically rather than randomly. 
For such devices, the activation process is deterministic and assumed to be known in advance. In such case, we assume that device $i$ becomes active each $\tau_i$ seconds during $[0,T]$ time duration. Clearly, the number of activations for a device $i$ during $[0,T]$ is $\left\lfloor {\frac{T}{{{\tau _i}}}} \right\rfloor$.\vspace{-0.4cm} 
\textcolor{black}{
\subsection{Channel Assignment Strategy} 
Here, given only the devices' locations, a practical channel assignment approach is to assign different channels to devices which are located in proximity of each other. This approach significantly mitigates the possibility of having strong interference between two closely located devices. For the channel assignment problem, we have adopted a constrained \textit{K-mean clustering} strategy \cite{selim}, which is an efficient distance-based clustering approach in which a set of given points are grouped into $K$ clusters based on their proximity. In this case, given the number of active devices, $L_n$, and the number of  orthogonal channels, $R\le L_n$, we group the devices based on proximity, and assign different channels to devices which are in the same~ group.}

Now, we present our optimization problem to find the UAVs' locations, device association, and transmit power of IoT devices at each update time $t_n$ during $[0,T]$: 

\hspace{3cm}\textbf{(OP)}:\vspace{-0.4cm}
\begin{align}
&\mathop {\min}\limits_{\boldsymbol{v}_j, \boldsymbol{c},\boldsymbol{P}} \sum\limits_{i=1}^{{L_n}} {{P_i}}\, ,\,\,\,\,\,\,\, \forall i \in \mathcal{L}_n, \,\, \forall  j \in \mathcal{K},\label{Assosiation1}\\
\text{s.t.} \,\, &\frac{{{P_i}{{\bar g}_{ic_i}}(\boldsymbol{v}_{c_i})}}{{\sum\limits_{k \in \mathcal{Z}_i} {{P_k}{\bar g}_{kc_i}(\boldsymbol{v}_{c_i})}  + {\sigma ^2}}} \ge \gamma, \label{SINR}\\
&0 < {P_i} \le {P_\textrm{max}},\vspace{-0.3cm}\label{Pmax}
\end{align}
where $L_n$ is the total number of active devices at update time $t_n$, and $\mathcal{L}_n$ is the set of devices' index. $\boldsymbol{P}$ is the transmit power vector with each element $P_i$ being the transmit power of device $i$. Also, $\boldsymbol{v}_j$ is the 3D location of UAV $j$, and $\boldsymbol{c}$ is the device association vector with each element $c_i$ being the index of the UAV that is assigned to device $i$. $P_\textrm{max}$ is the maximum transmit power of each IoT device, and $\sigma ^2$ is the noise power. Furthermore, ${\bar g}_{ic_i}(\boldsymbol{v}_{c_i})$ is the average channel gain between device $i$ and UAV $c_i$ which is a function of the UAV's location. Also, ${\bar g}_{kc_i}(\boldsymbol{v}_{c_i})$ is the average channel gain between interfering device $k$ and UAV $c_i$. 
 In (\ref{SINR}), $\mathcal{Z}_i$ is the set of all other devices that use the same channel as device $i$ and create interference. $\gamma$ is the SINR target which must be achieved by all the devices, (\ref{SINR}) represents the SINR requirement, and (\ref{Pmax}) shows the maximum transmit power constraint. Hereinafter, we call \textbf{(OP)} the \textit{original problem}.

Note that, in (\ref{Assosiation1}), the transmit power of the IoT devices, the 3D locations of the UAVs, and the UAV-device associations are unknowns. Clearly, the locations of the UAVs impact the channel gain between the devices and UAVs, and, hence, they affect the transmit power of each device, $P_i$. Furthermore, given (\ref{SINR}), due to the mutual interference between the devices, the transmit power of each device depends also on the  transmit power of the interfering devices as well as the device-UAV associations. In addition, the device-UAV associations depend on the UAVs' locations which are also unknowns. Therefore, there is a mutual dependency between all the optimization variables in \textbf{(OP)}. Moreover, considering (\ref{PLoS}) and constraint (\ref{SINR}), we can see that, this optimization problem is highly non-linear and non-convex. Indeed, solving (\ref{Assosiation1}) is significantly  challenging due to the mutual dependency of the optimization variables, non-linearity, and non-convexity of the problem. Next, we propose a framework for solving this optimization problem. 

In essence, our proposed framework for solving \textbf{(OP)} proceeds as follows. At each update time $t_n$, given the fixed UAVs' locations, we find the optimal device-UAV association and the transmit power of the devices. Next, given the fixed UAV association from the previous step, we determine the \textcolor{black}{sub-optimal} locations of the UAVs and update the transmit power the devices accordingly. This procedure is done iteratively until the 3D UAVs' locations, device association, and the transmit power of the devices are found. Clearly, at each step, the total transmit power of the devices decreases, and, hence, the proposed algorithm converges. Fig. \ref{Flowchart} shows a block diagram that summarizes the main steps for solving \textbf{(OP)}.  Next, we discuss, in detail, each block of the proposed solution in Fig. \ref{Flowchart}. \vspace{-0.3cm}
\begin{figure}[!t]
	\begin{center}
		\vspace{-0.2cm}
		\includegraphics[width=15cm]{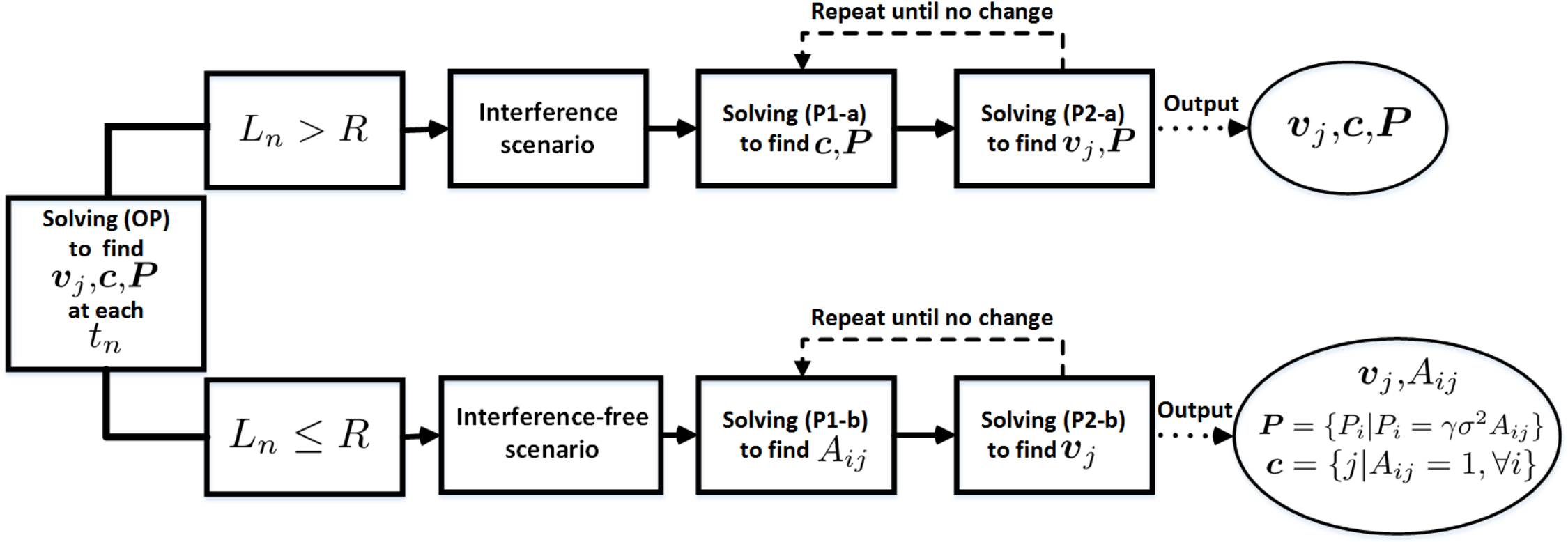}
		\vspace{-0.7cm}
		\caption{ \small Block diagram for the proposed solution.}\vspace{-1.4cm}
		\label{Flowchart}
	\end{center}
\end{figure}


\section{UAV Deployment and Device Association with Power control}
Here, given the locations of active IoT devices, we minimize the total transmit power of the devices by solving (\ref{Assosiation1}). Clearly, the UAVs' locations and the device association are mutually dependent. In particular, to find the device association, the locations of the UAVs must be known. Moreover, the UAVs' locations cannot be optimized without knowing the device association.  

Therefore, we decompose \textbf{(OP)} into two subproblems that will be solved iteratively. In the first subproblem, given the locations of the UAVs, we find the optimal device association and the transmit power of the devices such that the uplink SINR requirements of all active devices are satisfied with a minimum total transmit power. In the second subproblem, given the device association resulting from the first subproblem, we determine the \textcolor{black}{sub-optimal} locations of the UAVs for which the transmit power of the devices is minimized. Note that, this is an iterative process in which the results of each subproblem are used in the other subproblem for the next iteration. These computations are performed by the control center until the 3D UAVs' locations, device association, and transmit power of the devices are obtained. 
 
Note that, given the limited number of available orthogonal channels, the interference between the devices will depend on the number of active devices at each update time. Clearly, there is no interference when the number of active devices at time $t_n$ is less than the number of orthogonal channels, or equivalently $L_n\le R$. Given that, in the interference-free scenario, one can provide a more tractable analysis, here, for the deployment and association steps. Therefore, we will investigate the interference and interference-free scenarios, separately.\vspace{-0.4cm}
\subsection{Device Association and Power Control} \label{assosiationPowerControl}
Here, given initial locations of the UAVs, we aim to find the optimal device association as well as the transmit power of each IoT device such that the total transmit power used for successful uplink communications is minimized. \vspace{-0.001cm}
\subsubsection{Interference scenario}
In the presence of uplink interference \textcolor{black}{when $L_n>R$,} the power minimization problem at update time $t_n$ will be given by:\vspace{-0.1cm}

\hspace{3cm}\textbf{(P1-a)}:\vspace{-0.5cm}
\begin{align}
&\mathop {\min}\limits_{ \boldsymbol{c},\boldsymbol{P}} \sum\limits_{i=1}^{{L_n}} {{P_i}}\, ,\,\,\,\,\,\,\, \forall i \in \mathcal{L}_n, \,\, \forall j \in \mathcal{K},\label{AssosiationP1a}\\
\text{s.t.} \,\, &\frac{{{P_i}{{\bar g}_{ic_i}}}}{{\sum\limits_{k \in \mathcal{Z}_i} {{P_k}{\bar g}_{kc_i}}  + {\sigma ^2}}} \ge \gamma, \label{SINRP1a}\\
&0 < {P_i} \le {P_\textrm{max}}.\label{PmaxP1a}
\end{align}


To solve \textbf{(P1-a)}, we need to jointly find the optimal device association and the transmit power of all active devices under the SINR constraints for the given UAVs' locations. Clearly, given the fixed UAVs' locations, optimization variables are the device association and the transmit power of the devices. Note that, satisfying the SINR requirement of each device significantly depends on the distance and altitude of its serving UAV. Therefore, the feasibility of the optimization problem in (\ref{AssosiationP1a}) depends on the locations of the UAVs. Next, we derive an upper bound and a lower bound for the altitude of serving UAV $j$ as a function of its distance from device $i$. \vspace{-0.1cm}
\begin{proposition}
	\textnormal{
	The lower and upper bounds for the altitude of a UAV $j$ needed to serve a device $i$ (meeting its SINR requirement), are given by:  
\begin{equation}\label{h}
	{d_{ij}}\sin \left( {\frac{1}{\beta }\ln \left( {\frac{{\psi Q}}{{1 - Q}}} \right) + \psi } \right) \le {h_j} \le {\left( {\frac{{{P_\textrm{max}}}}{{\gamma K_o^\alpha {\sigma ^2}{\eta _1}}}} \right)^{1/\alpha }},
\end{equation}	
where $d_{ij}$ is the distance between UAV $j$ and device $i$, and $Q = \frac{{{P_\textrm{max}}}}{{\gamma d_{ij}^\alpha {K_{o}}^\alpha {\sigma ^2}\left( {{\eta _1} - {\eta _2}} \right)}} - \frac{{{\eta _2}}}{{{\eta _1} - {\eta _2}}}$.}
\end{proposition}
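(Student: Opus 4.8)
The plan is to derive both bounds from a single necessary feasibility condition and then read that condition in two complementary ways. First I would observe that the SINR in (\ref{SINRP1a}) is largest when device $i$ sees no co-channel interference and transmits at the maximum power $P_\textrm{max}$; hence a necessary condition for device $i$ to be servable by UAV $j$ (for its SINR requirement to be meetable at all) is
\[
\frac{P_\textrm{max}\,\bar g_{ij}}{\sigma^2}\ge\gamma,
\qquad\text{i.e.}\qquad
\bar L_{ij}\le\frac{P_\textrm{max}}{\gamma\sigma^2}.
\]
Substituting the average path loss (\ref{L_ave}) and writing $P^{ij}_{\mathrm{NLoS}}=1-P^{ij}_{\mathrm{LoS}}$ turns this into
\[
\bigl[P^{ij}_{\mathrm{LoS}}(\eta_1-\eta_2)+\eta_2\bigr](K_o d_{ij})^\alpha\le\frac{P_\textrm{max}}{\gamma\sigma^2}.
\]
All of the remaining work is to read this one inequality twice.

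For the upper bound I would use $P^{ij}_{\mathrm{LoS}}\le 1$ together with $\eta_2>\eta_1$, so that the bracketed factor is at least $\eta_1$ and therefore $\bar L_{ij}\ge\eta_1(K_o d_{ij})^\alpha$. Combining this with the feasibility inequality and solving for $d_{ij}$ gives $d_{ij}\le\bigl(P_\textrm{max}/(\gamma K_o^\alpha\sigma^2\eta_1)\bigr)^{1/\alpha}$; since $h_j=d_{ij}\sin\theta_{ij}\le d_{ij}$, the same quantity bounds $h_j$ from above, which is exactly the right-hand inequality in (\ref{h}).

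For the lower bound I would instead solve the feasibility inequality directly for $P^{ij}_{\mathrm{LoS}}$. Isolating the bracketed factor and dividing by $\eta_1-\eta_2$ — and here care is needed, since $\eta_1-\eta_2<0$ forces the inequality to reverse — produces exactly $P^{ij}_{\mathrm{LoS}}\ge Q$ with $Q$ as defined in the statement. Because the LoS probability (\ref{PLoS}) is a strictly increasing logistic function of the elevation angle $\theta_{ij}$, the condition $P^{ij}_{\mathrm{LoS}}\ge Q$ is equivalent to $\theta_{ij}\ge\frac1\beta\ln\bigl(\frac{\psi Q}{1-Q}\bigr)+\psi$, obtained by clearing the denominator of (\ref{PLoS}), isolating the exponential, and taking logarithms. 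Finally, $h_j=d_{ij}\sin\theta_{ij}$ together with the monotonicity of $\sin$ on $[0,\pi/2]$ converts this angular lower bound into the altitude lower bound in (\ref{h}).

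The step requiring the most care — the main obstacle — is the inversion of the logistic LoS model and the attendant sign bookkeeping: one must track the reversal coming from $\eta_1-\eta_2<0$, and must verify that $Q\in(0,1)$ so that $\psi Q/(1-Q)>0$ and the logarithm is well defined (this is precisely the feasibility regime in which the device can be served). Everything else reduces to monotonicity — $P^{ij}_{\mathrm{LoS}}$ increasing in $\theta_{ij}$, $\sin$ increasing on the relevant range, and $h_j\le d_{ij}$ — each of which lets an inequality on one quantity be transferred to $h_j$.
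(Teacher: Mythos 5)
Your proposal is correct and follows essentially the same route as the paper's proof: both reduce the SINR constraint to the best case of zero interference and maximum power, solve the resulting inequality for $P^{ij}_{\mathrm{LoS}}\ge Q$ (with the sign reversal from $\eta_1-\eta_2<0$), invert the logistic LoS model to get the elevation-angle condition and hence the altitude lower bound, and obtain the upper bound by noting the path-loss factor is minimized at $P^{ij}_{\mathrm{LoS}}=1$ so that $h_j\le d_{ij}\le\left(P_\textrm{max}/(\gamma K_o^\alpha\sigma^2\eta_1)\right)^{1/\alpha}$. Your version is in fact slightly more careful than the paper's, making explicit the monotonicity arguments and the requirement $Q\in(0,1)$ for the logarithm to be well defined.
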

\begin{proof}
	Let $I_i$ be the cumulative interference from interfering devices on device $i$, then:
	\begin{align}
	&\textrm{SINR}_i = \frac{{{P_i}{{\bar g}_{ij}}}}{{{I_i} + {\sigma ^2}}} \ge \gamma, \nonumber\\
	&d_{ij}^\alpha  \le \frac{{{P_i}}}{{\gamma K_o^\alpha \left( {{I_i} + {\sigma ^2}} \right)\left( {{\eta _1}P_{\textrm{LoS}}^{ij} + {\eta _2}P_{\textrm{NLoS}}^{ij}} \right)}} \le \frac{{{P_\textrm{max}}}}{{\gamma K_o^\alpha {\sigma ^2}\left( {{\eta _1}P_\textrm{LoS}^{ij} + {\eta _2}(1 - P_{\textrm{LoS}}^{ij})} \right)}},\nonumber\\
	&P_\textrm{LoS}^{ij} \ge \frac{{{P_\textrm{max}}}}{{\gamma d_{ij}^\alpha K_o^\alpha {\sigma ^2}\left( {{\eta _1} - {\eta _2}} \right)}} - \frac{{{\eta _2}}}{{{\eta _1} - {\eta _2}}},\nonumber\\
	&\textrm{considering} \,\,Q = \frac{{{P_\textrm{max}}}}{{\gamma d_{ij}^\alpha K_o^\alpha {\sigma ^2}\left( {{\eta _1} - {\eta _2}} \right)}} - \frac{{{\eta _2}}}{{{\eta _1} - {\eta _2}}}, \textrm{and using equation (\ref{PLoS})},\nonumber\\
	&{\theta _{ij}}\mathop  \ge \limits^{(a)} \frac{1}{\beta }\ln \left( {\frac{{\psi Q}}{{1 - Q}}} \right) + \psi,\nonumber\\
	&{h_j} \ge {d_{ij}}\sin \left( {\frac{1}{\beta }\ln \left( {\frac{{\psi Q}}{{1 - Q}}} \right) + \psi } \right),\label{lower}
	\end{align}
	where $(a)$ stems from (\ref{PLoS}). Also, we have:
	\begin{equation}
	 d_{ij}^\alpha  \le \frac{{{P_\textrm{max}}}}{{\gamma K_o^\alpha {\sigma ^2}\left( {{\eta _1}P_\textrm{LoS}^{ij} + {\eta _2}(1 - P_\textrm{LoS}^{ij})} \right)}}\mathop  \le \limits^{(b)} \frac{{{P_\textrm{max}}}}{{\gamma K_o^\alpha {\sigma ^2}{\eta _1}}},
	 \end{equation}
	where in $(b)$ we consider $P_\textrm{LoS}=1$ which is equivalent to ${h_j} = {d_{ij}}$. Finally,   
	\begin{equation} \label{upper}
	{h_j} \le {\left( {\frac{{{P_\textrm{max}}}}{{\gamma K_o^\alpha {\sigma ^2}{\eta _1}}}} \right)^{1/\alpha }}.
	 \end{equation}
	 Clearly, (\ref{lower}) and (\ref{upper}) prove the proposition.
	 \end{proof}
Proposition 1 provides the necessary conditions for the UAV's altitude needed in order to be able to serve the IoT device. From (\ref{h}), the minimum altitude must increase as the distance increases. In other words, the UAV's altitude needs to be adjusted based on the distance such that the elevation angle between the device and the UAV exceeds $\frac{1}{\beta }\ln \left( {\frac{{\psi Q}}{{1 - Q}}} \right) + \psi$. Furthermore, as expected, the maximum altitude of the UAVs significantly depends on the maximum transmit power of the devices as given in (\ref{upper}).

Now, given the fixed UAVs' locations, problem \textbf{(P1-a)} corresponds to the problem of  joint user association and uplink power control in the terrestrial base station scenario. The algorithm presented in \cite{Yates} and \cite{Sun} leads to the global optimal solution to the joint user association and uplink power control under the SINR and maximum transmit power constraints. As a result, the optimal transmit power of users and the base station association for which the total uplink transmit power is globally minimized, is determined. In problem \textbf{(P1-a)}, the IoT devices correspond to the users, and fixed positioned UAVs correspond to the terrestrial base stations. For our case, this algorithm, as given in Algorithm\,1, will proceed as follows. \begin{MyColorPar}{black} We start with an initial value for transmit power of all active devices in step \ref{0}. Then, in step \ref{1} we compute $\rho _{ij}^{(t)}$ at iteration $t$. In this case, $\rho _{ij}^{(t)}$ represents the minimum required transmit power of device $i$ to reach an SINR of 1 while connecting to UAV $j$, given the fixed transmit power of other devices. In step \ref{2}, we find the minimum transmit power of device $i$ if it connects to the best UAV. Then, the index of the best UAV which is assigned to device $i$ is given in step \ref{3}. In step \ref{4} we update the transmit power of device $i$ in order to achieve an SINR of $\gamma$. Steps \ref{1} to \ref{4} must be repeated for all devices to obtain the optimal transmit power and the device association vectors. 
	

	\begin{algorithm} 
		\begin{small}
			\begin{MyColorPar}{black}
			\caption{\textcolor{black}{Iterative algorithm for joint power control and device-UAV association}}
			\label{Fixed}
			\begin{algorithmic}[1] 	
					
				\State \textbf{Inputs:} Locations of UAVs and IoT devices
				\State \textbf{Outputs:}  Device association vector ($\boldsymbol{c}$), and
				transmit power of all devices ($\boldsymbol{P}$).
				\State {Set $t=0$, and initialize  $\boldsymbol{P} ^{(0)}= \left( {P_1^{(0)},...,P_K^{(0)}} \right)$}.\label{0}
				
				\State {Define $\rho _{ij}^{(t)}=\frac{{{\sigma ^2} + \sum\limits_{k \in \mathcal{Z}_i} {P_k^{(t)}{{\bar g}_{kj}}} }}{{{{\bar g}_{ij}}}}$.} \label{1}
				
				\State {Compute  ${S_i}(\boldsymbol{P}^{(t)}) = \mathop {\min }\limits_{j \in \mathcal{K}} \rho _{ij}^{(t)}$.}\label{2}
				
				\State {Find  ${c_i}(\boldsymbol{P}^{(t)}) = \mathop {\arg \min }\limits_{j \in \mathcal{K}} \rho _{ij}^{(t)}$.}\label{3}

				\State {Update $P_i^{(t + 1)} = \min \left\{ {\gamma {S_i}(\boldsymbol{P}^{(t)}),{P_{\max }}} \right\},\,\forall i \in \mathcal{L}_n.$}\label{4}

				\State {Repeat steps \ref{1} to \ref{4} for all devices until $\boldsymbol{P}^{(t)}$ converges.}
				
				\State 	{$\boldsymbol{P}=\boldsymbol{P}^{(t)}$, $\boldsymbol{c}=\left[ {c_i}(\boldsymbol{P}^{(t)})\right], \forall i\in \mathcal{L}_n$}.

			\end{algorithmic}\vspace{-0.1cm}
			\end{MyColorPar}
		\end{small}
	\end{algorithm}
	\vspace{-0.3cm}
\end{MyColorPar}   
As shown in \cite{Yates}, after several iterations this algorithm quickly converges to the global optimal solution if the SINR of each device is equal to $\gamma$. Hence, by solving \textbf{(P1-a)}, we are able to find the optimal transmit power of the devices and the device association for any given fixed locations of the UAVs. 
Then, the device association and transmit power of the devices will be used as inputs for solving the second subproblem in which  the UAVs' locations need to be optimized (in Subsection III-B). 
 \subsubsection{Interference-free scenario}
At each update time $t_n$, if the number of active devices is lower than the number of orthogonal channels or equivalently $L_n\le R$, there will be no interference between the devices. Unlike in the interference scenario, here, the transmit power of each device can be computed only based on the channel gain between the device and its serving UAV. Therefore, considering (\ref{L_ave}), and (\ref{SINR}) without interference, the minimum transmit power of device $i$ in order to connect to UAV $j$ is $P_i=\gamma \sigma^2 \bar L_{ij}$. In this case, given the locations of the UAVs (fixed for all $ \boldsymbol{v}_j$), $\bar L_{ij}$ is known for all devices and problem \textbf{(P1-a)} can be simplified. Hence, the optimal association problem under minimum power in the interference-free scenario will be:\vspace{-0.1cm} 

\hspace{3cm}\textbf{(P1-b)}:\vspace{-0.4cm}
\begin{align} \label{assign}
&\min\limits_{A_{ij}} \sum\limits_{j = 1}^K {\sum\limits_{i = 1}^{L_n} {{A_{ij}}\bar{L}_{ij}} }, \\
{\rm{s}}{\rm{.t}}{\rm{.}}&\sum\limits_{j = 1}^K {{A_{ij}}}  = 1, \,\,\,\forall i\in\mathcal{L}_n,\\
&{A_{ij}}\bar{L}_{ij}\le \frac{P_\textrm{max}}{\gamma \sigma^2}, \,\,{A_{ij}} \in \{ 0,1\},\,\, \forall i\in\mathcal{L}_n, j\in\mathcal{K},\label{const2}
\end{align}
where $\bar{L}_{ij}$ is the average path loss between device $i$ and UAV $j$, which is known, give the locations of the UAV and the device. $A_{ij}$ is equal to 1 if device $i$ is assigned to UAV $j$, otherwise $A_{ij}$ will be equal to 0. Clearly, the optimization problem in (\ref{assign}) is an integer  linear programming (ILP). In general, this problem can be solved by using standard ILP solution methods such as the cutting plane. 
 However, these solutions might not be efficient as the size of the problem grows. In particular, due to the potentially high number of IoT devices, a more efficient technique for solving (\ref{assign}) is needed. Here, we transform problem (\ref{assign}) to a standard assignment problem \cite{BookAs} which can be solved in polynomial time. In the assignment problem, the objective is to find the optimal one-to-one assignment between two sets of nodes with a minimum cost. In our problem, the devices and the UAVs can be considered as two sets of nodes that need to be assigned to each other with an assignment cost of $L_{ij}$ between nodes $i$ and $j$. \textcolor{black}{However, compared to the classical assignment problem, \textbf{(P1-b)} has an additional constraint in (\ref{const2}) which results from the transmit maximum power constraint. This constraint indicates that device $i$ cannot be assigned to UAV $j$ if $\bar{L}_{ij}> \frac{P_\textrm{max}}{\gamma \sigma^2}$. Therefore, in the assignment problem we can consider $L_{ij}=+\infty$ to avoid assigning device $i$ to UAV $j$ when $\bar{L}_{ij}> \frac{P_\textrm{max}}{\gamma \sigma^2}$ that implies the constraint in (\ref{const2}) is violated. Subsequently, using the updated assignment costs, $L_{ij}$, problem \textbf{(P1-b)} will be transformed to the classical assignment problem which can be solved using the Hungarian method with a time complexity of  $O((L_nK)^3)$ \cite{ku}.} 
\textcolor{black}{We note that, in absence of interference, problems \textbf{(P1-a)} and \textbf{(P1-b)} have the same solution.} Next, we present the second subproblem of the original optimization problem (\ref{Assosiation1}) in order to optimize the UAVs' locations.\vspace{-0.3cm} 


\subsection{Optimal Locations of the UAVs}

In this section, given the optimal device association, our goal is to find the \textcolor{black}{sub-optimal} locations of the UAVs for which the total transmit power of the devices is minimized. In other words, considering the mobile nature of the UAVs, we intelligently update the location of each UAV based on the location of its associated IoT devices.    \vspace{-0.5cm}\\

\subsubsection{Interference scenario}
In this scenario, given the UAV-device associations, the optimization problem to find the 3D locations of the UAVs and the transmit power of the devices will be:

\hspace{2.5cm}\textbf{(P2-a)}:\vspace{-0.4cm}
\begin{align}
&\mathop {\min}\limits_{\boldsymbol{v}_j, \boldsymbol{P}} \sum\limits_{i=1}^{{L_n}} {{P_i}}\, ,\,\,\,\,\,\,\, \forall i \in \mathcal{L}_n, \,\, \forall j \in \mathcal{K},\label{AssosiationP2a}\\
\text{s.t.} \,\, &\frac{{{P_i}{{\bar g}_{ij}}(\boldsymbol{v}_j)}}{{\sum\limits_{k \in \mathcal{Z}_i} {{P_k}{\bar g}_{kj}(\boldsymbol{v}_j)}  + {\sigma ^2}}} \ge \gamma, \label{SINRP2a}\\
&0 < {P_i} \le {P_\textrm{max}}, \vspace{-0.3cm}\label{PmaxP2a}
\end{align}
where $\boldsymbol{v_j}=(x^\textrm{uav}_{j},y^\textrm{uav}_{j},h_j)$ indicates the 3D location of UAV $j$. Clearly, the channel gains used in (\ref{SINRP2a}) depend on the locations of the UAVs. Note that, according to (\ref{PLoS}) and (\ref{L_ave}), $\bar g_{ij}(\boldsymbol{v}_j)$ is a non-convex function of $\boldsymbol{v}_j$. Consequently, constraint (\ref{SINRP2a}) is also non-linear and non-convex. Furthermore, the transmit power of the devices and the UAVs' locations are mutually dependent. On the one hand, the location of each UAV must be determined such that its associated devices can connect to it with a minimum transmit power. On the other hand, the UAV's location will impact the amount of interference received from other interfering devices. Indeed, solving the optimization problem in \textbf{(P2-a)} is challenging as the problem is highly non-linear and non-convex. In particular, the complexity of this problem stems from the mutual dependence between the transmit power of the devices and the locations of the UAVs.  

Our proposed approach to solve \textbf{(P2-a)} is based on optimizing the location of each UAV separately. Note that, using the results of \textbf{(P1-a)}, for each UAV, the associated and non-associated devices and their transmit power, $P^*_i$, are known. Our proposed solution proceeds as follows. The cloud starts by considering a single UAV and then optimizing its location given the fixed transmit power for the non-associated devices. Then, the cloud updates the transmit power of the associated devices according to the new location of their serving UAV. Hence, at each step, the location of a UAV and the transmit power of its associated devices are updated. At each iteration, after finding $P^*_i$, we set $P_\textrm{max}=P^*_i$ for the next iteration. This ensures that the transmit power of the devices does not increase during the iterative process. The entire process is repeated by the cloud for all UAVs one-by-one, until the transmit power of the devices cannot be further reduced by changing the UAVs' locations.  Note that, at each step, one must determine the optimal location of each UAV such that the total transmit power of its associated devices is minimized. 
 
Now, let $\mathcal{C}_j$ be the set of
 devices' index associated to UAV $j$. Given (\ref{L_ave}), (\ref{AssosiationP2a}), and (\ref{SINRP2a}), the optimal location of UAV $j$ can be determined by solving the following problem:
 \begin{align}
& \mathop {\min }\limits_{\boldsymbol{v_j}} \sum\limits_{i\in \mathcal{C}_j} {F_i(\boldsymbol{v}_j)},\label{Opt} \\ 
\textrm{s.t.}\hspace{0.3cm}
&F_i(\boldsymbol{v}_j)=  {\gamma \left( {{\eta _1}P_\textrm{LoS}^{ij} + {\eta _2}P_\textrm{NLoS}^{ij}} \right){{\left( {{K_o}{d_{ij}}} \right)}^\alpha }\left[ {\sum\limits_{k \in {Z_i}} {\frac{{{P_k}}}{{\left( {{\eta _1}P_\textrm{LoS}^{kj} + {\eta _2}P_\textrm{NLoS}^{kj}} \right){{\left( {{K_o}{d_{kj}}} \right)}^\alpha }}} + {\sigma ^2}} } \right]},\label{Fj}\\
&{F_i(\boldsymbol{v}_j)}\le{P_i^*},\,\,\, \forall i \in \mathcal{C}_j,\label{Pi}
\end{align}
 Note that, $P_\textrm{LoS}^{ij}$, $P_\textrm{LoS}^{kj}$,  $d_{kj}$, and $d_{ij}$ depend on the locations of UAVs $(\boldsymbol{v_j})$. Also,  (\ref{Pi}) guarantees that the transmit power of each device is reduced by updating the location of serving UAV. 

Clearly, (\ref{Opt}), (\ref{Fj}), and (\ref{Pi}) are non-linear and non-convex. Considering the fact that the objective function and constraints are twice differentiable, we convert (\ref{Opt}) to a quadratic form which can be solved using efficient techniques. In particular, we adopt the sequential quadratic programming (SQP) method as one of the most powerful algorithms for solving large scale and constrained differentiable non-linear optimization problems \cite{boggs}. Clearly, considering the high non-linearity of (\ref{Fj}) as well as the large number of constraints, the SQP is a suitable method for solving our optimization problem. In the SQP method, the objective function is approximated by a quadratic function, and the constraints are linearized. Subsequently, the optimization problem is solved by solving multiple quadratic subproblems. \begin{MyColorPar}{black} In our optimization problem, to find the optimal location of UAV $j$, $\boldsymbol{{v}}_{j,k}$, we start with an initial point $\boldsymbol{v}_{j,k}$ (starting with $k=0$). Then, we use the first order necessary optimality or
Karush-Kuhn-Tucker (KKT) conditions to find the Lagrangian variables. In particular, we use:	
\begin{equation}
\nabla L(\boldsymbol{v}_{j,k},\boldsymbol{\lambda}_k ) = \sum\limits_{i \in \mathcal{C}_j} {\nabla{F_i}(\boldsymbol{{v}}_{j,k})}+\nabla \boldsymbol{w}_i(\boldsymbol{{v}}_{j,k})\boldsymbol{\lambda}_k=0, \label{Lag}
\end{equation}
where $L(\boldsymbol{v}_{j,k},\boldsymbol{\lambda}_k ) = \sum\limits_{i \in \mathcal{C}_j} {{F_i}(\boldsymbol{{v}}_{j,k})}  +\boldsymbol{\lambda}^T \boldsymbol{w}(\boldsymbol{{v}}_{j,k})$ is the Lagrangian function, $\boldsymbol{\lambda}_k$ is the vector of Lagrangian variables, and $\boldsymbol{w}(\boldsymbol{{v}}_{j,k})$ is a vector of functions with each element being $\boldsymbol{w}_i(\boldsymbol{{v}}_{j,k})={\left( {{F_i}(\boldsymbol{{v}}_{j,k}) - P_i^*} \right)}$.
Then, given $\boldsymbol{v}_{j,k}$, we determine the Lagrange variables by \cite{boggs}:	\vspace{0.01cm}
\begin{equation}
\boldsymbol{\lambda}_k=-\left[\boldsymbol{w}_i(\boldsymbol{{v}}_{j,k})^T \nabla \boldsymbol{w}_i(\boldsymbol{{v}}_{j,k})  \right]^{-1} \nabla \boldsymbol{w}_i(\boldsymbol{{v}}_{j,k})^T \sum\limits_{i \in \mathcal{C}_j} {\nabla{F_i}(\boldsymbol{{v}}_{j,k})}. \label{Lam}
\end{equation}
\end{MyColorPar}
In the next step, we update $\boldsymbol{v}_{j,k+1}=\boldsymbol{v}_{j,k}+\boldsymbol{d}_k$, where $\boldsymbol{d}_k$ is the solution to the following quadratic programming problem: \vspace{-0.1cm}
\begin{align}
&\boldsymbol{{d}}_k = \mathop {\arg \min }\limits_{\boldsymbol{d}} \sum\limits_{i \in \mathcal {C}_j} {\nabla {F_i}{{({\boldsymbol{v}_{j,k}})}^T}\boldsymbol{d}}  + \frac{1}{2}{\boldsymbol{d^T}}{\nabla ^2}\left[ {L({\boldsymbol{v}_{j,k}},{\boldsymbol{\lambda }_k})} \right]\boldsymbol{d},\label{QP}\\
\textrm{s.t.}\,\,\,&{F_i}({\boldsymbol{v}_{j,k}}) + \nabla {F_i}{({\boldsymbol{v}_{j,k}})^T}\boldsymbol{d} - P_i^* \le 0,\,\,\,\,\,\forall i \in \mathcal {C}_j, \label{dk}
\end{align}
where, $\nabla$ and ${\nabla ^2}$ indicate the gradient and Hessian operations. Clearly, (\ref{QP}) is an inequality constrained quadratic programming. Moreover, it can be shown that the Hessian matrix, ${\nabla ^2}\left[ {L({\boldsymbol{v}},{\boldsymbol{\lambda }_k})} \right]$, is not positive semidefinite, and, hence, (\ref{QP}) is non-convex in general. In this case, the two possible solution approaches are the active set, and the interior point methods. Typically, the active set method is preferred when the Hessian matrix is moderate/small and dense. The interior point, however, is a suitable approach when the Hessian matrix is large and sparse \cite{sch}. In our problem, due to the  potential possible high number of active devices, the number of constraints can be high. Therefore, the Hessian matrix, ${\nabla ^2}\left[ {L({\boldsymbol{v}_{j,k}},{\boldsymbol{\lambda }_k})} \right]$, is large and sparse, and, hence, the interior point method is used.

 Finally, based on (\ref{Lag})-(\ref{dk}), the \textcolor{blue}{sub-optimal} location of each UAV $(\boldsymbol{v}_j)$, given the fixed device association, will be determined. Next, we address the UAVs' location optimization in an interference-free scenario. \vspace{-0.5cm}\\  

\subsubsection{Interference-free scenario}
In the absence of interference, we are able to provide tractable analysis on the UAVs' locations optimization. 
Considering $\alpha=2$ for LoS ground-to-air propagation \cite{HouraniOptimal}, the optimal locations of the UAVs will be given by:  \vspace{1.7cm}

\hspace{3cm}\textbf{(P2-b)}:\vspace{-0.5cm}
\begin{align}\label{vj}
& \mathop {\min }\limits_{\boldsymbol{v_j}} \sum\limits_{i\in \mathcal{C}_j}K_o^2{\sigma ^2}\gamma \left( {{\eta _1}P_\textrm{LoS}^{ij} + {\eta _2}P_\textrm{NLoS}^{ij}} \right){{ {{d_{ij}}} }^2 }, \\  
\textrm{s.t.}\,\,&\left( {{\eta _1}P_\textrm{LoS}^{ij} + {\eta _2}P_\textrm{NLoS}^{ij}} \right)d_{ij}^2 \le \frac{{{P_{\max }}}}{{K_o^2{\sigma ^2}\gamma }},\,\,\, \forall i \in \mathcal{C}_j.\label{D}
\end{align}

This optimization problem is non-convex over $\boldsymbol{v_j}=(x_{j}^\textrm{uav},y_{j}^\textrm{uav},h_j)$. However, given any altitude $h_j$, we can provide a tractable solution to this problem. \begin{MyColorPar}{black} First, given $h_j$, we consider the following function that is used in (\ref{vj}): \vspace{-0.3cm}
\begin{equation}
q(d_{ij})=K_o^2{\sigma ^2}\gamma \left( {{\eta _1}P_\textrm{LoS}^{ij} + {\eta _2}P_\textrm{NLoS}^{ij}} \right)d_{ij}^2.\vspace{-0.1cm}
\end{equation}
Clearly, considering the fact that $0\le P_\textrm{LoS}^{ij} \le 1$, and $P_\textrm{NLoS}^{ij}=1-P_\textrm{LoS}^{ij}$, we have: 
\begin{equation}
K_o^2{\sigma ^2}\gamma \eta_1 d_{ij}^2\le q(d_{ij}) \le K_o^2{\sigma ^2}\gamma \eta_2 d_{ij}^2. \label{ineq}
\end{equation} 
From (\ref{ineq}), we can see that $q(d_{ij})$ is bounded between two quadratic functions that each is linearly proportional to $d_{ij}^2$. 
Now, using the least square estimation method, we find the coefficients $\alpha_1$ and $\alpha_2$ such that, given any $h_j$, $q(d_{ij})$ is approximated by the following convex quadratic function:\vspace{-0.2cm}  
\begin{equation}
q(d_{ij})\approx {\alpha _1}d_{ij}^2 + {\alpha _2}.\vspace{-0.2cm} \label{apr}
\end{equation}
where $\alpha_1$ and $\alpha_2$ are altitude dependent coefficients. Note that, using the quadratic approximation, the solution of (\ref{vj}) becomes more tractable.
\end{MyColorPar}

 Fig.\,\ref{Error} shows the error in the objective function (\ref{vj}) due the quadratic approximation. As we can see from Fig.\,\ref{Error} which is obtained based on the parameters in Table I, the error is less than 4$\%$ for different UAVs' altitudes. 

\begin{figure}[!t]
	\begin{center}
		\vspace{-0.2cm}
		\includegraphics[width=7.5cm]{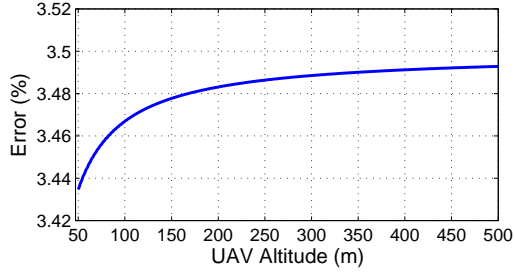}
		\vspace{-0.4cm}
		\caption{ \small Error in the objective function approximation.}\vspace{-1.3cm}
		\label{Error}
	\end{center}
\end{figure}  

Now, in  constraint (\ref{D}), we consider $D=\left( {{\eta _1}P_\textrm{LoS}^{ij} + {\eta _2}P_\textrm{NLoS}^{ij}} \right)d_{ij}^2$. Clearly, $D$ is an increasing function of $d_{ij}$ since ${\eta _1} - {\eta _2}\, $$<0$, and for a fixed altitude, LoS probability is a decreasing function of distance. Therefore, using $ {d_{ij}}^2 = {(x_i-x^\textrm{uav}_j)^2+(y_i-y^\textrm{uav}_j)^2+h_j^2 }$, and (\ref{apr}), for any given $h_j$ we can write the optimization problem (\ref{vj}) as:
\begin{align}
&\mathop {\min }\limits_{x^\textrm{uav}_j,y^\textrm{uav}_j} \sum\limits_{i \in {\mathcal{C}_j}} {{{(x^\textrm{uav}_j- {x_i})}^2} + {{(y^\textrm{uav}_j - {y_i})}^2}}+{h_j}^2 ,\label{theor}\\
\text{s.t.}\,\,&{(x^\textrm{uav}_j - {x_i})^2} + {(y^\textrm{uav}_j - {y_i})^2}+h_j^2 - \epsilon^2 \le 0, \,\,\, \forall \,\,\, i \in {\mathcal{C}_j},\, \text{and}\,\, j\in\mathcal{K}. \label{consupdate}
\end{align} 
where $\varepsilon  = \{ d|K_o^2{\sigma ^2}\gamma \left( {{\eta _1}P_\textrm{LoS} + {\eta _2}P_\textrm{NLoS}} \right)d^2 = {P_{\max }}\}$. Next, we derive the solution to problem (\ref{theor}) that corresponds to finding the \textcolor{black}{sub-optimal} UAVs' locations. 

\begin{theorem}
	\normalfont
	The solution to (\ref{theor}) is given by $\boldsymbol{s}^* = ({x^\textrm{uav}_j}^*,{y^\textrm{uav}_j}^*) =   - \boldsymbol{P{(\lambda )}}^{ - 1}\boldsymbol{Q(\lambda )}$, with the vector $\boldsymbol{\lambda}$ that maximizes the following concave function:
	\begin{align}
	&\mathop {\max {\rm{ }}}\limits_{\boldsymbol{\lambda}}  \frac{1}{2} \boldsymbol{Q(\lambda )}^T \boldsymbol{P{(\lambda )}}^{ - 1}\boldsymbol{Q(\lambda )}+   r(\boldsymbol{\lambda}),\\
	&\textnormal {s.t.} \,\,\,\boldsymbol{\lambda}  \ge 0,
	\end{align}
	where $\boldsymbol{P(\lambda)}= \boldsymbol{P}_o + \sum\limits_{i = 1}^{|{\mathcal{C}_j}|} {{\lambda _i}{P_i}}$, $\boldsymbol{Q(\lambda)}=\boldsymbol{Q}_o + \sum\limits_{i = 1}^{|{\mathcal{C}_j}|} {{\lambda_i}{Q_i}}$ and $r(\boldsymbol{\lambda})={r_o} + \sum\limits_{i = 1}^{|{\mathcal{C}_j}|} {{\lambda _i}{r_i}}$, with $\boldsymbol{P}_o$, $\boldsymbol{Q}_o$, $r_o$, $\boldsymbol{P}_i$, $\boldsymbol{Q}_i$, and $r_i$ given in the proof.	 	
\end{theorem}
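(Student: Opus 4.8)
The plan is to exploit the fact that, after the quadratic approximation (\ref{apr}), problem (\ref{theor}) has become a \emph{convex} quadratically constrained quadratic program in the single two-dimensional unknown $\boldsymbol{s}=(x^\textrm{uav}_j,y^\textrm{uav}_j)$, and to solve it exactly via Lagrangian duality. First I would rewrite the objective and each constraint in matrix-quadratic form. Writing $\boldsymbol{a}_i=(x_i,y_i)$, the objective of (\ref{theor}) equals
\begin{equation}
\boldsymbol{s}^T\boldsymbol{P}_o\boldsymbol{s}+2\boldsymbol{Q}_o^T\boldsymbol{s}+r_o,\qquad \boldsymbol{P}_o=|\mathcal{C}_j|\,\boldsymbol{I}_2,\ \ \boldsymbol{Q}_o=-\sum_{i\in\mathcal{C}_j}\boldsymbol{a}_i,\ \ r_o=\sum_{i\in\mathcal{C}_j}\big(\|\boldsymbol{a}_i\|^2+h_j^2\big),
\end{equation}
and each constraint in (\ref{consupdate}) equals $\boldsymbol{s}^T\boldsymbol{P}_i\boldsymbol{s}+2\boldsymbol{Q}_i^T\boldsymbol{s}+r_i\le 0$ with $\boldsymbol{P}_i=\boldsymbol{I}_2$, $\boldsymbol{Q}_i=-\boldsymbol{a}_i$, and $r_i=\|\boldsymbol{a}_i\|^2+h_j^2-\varepsilon^2$. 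Since every Hessian appearing here is a positive multiple of $\boldsymbol{I}_2$, both the objective and the constraints are convex, so (\ref{theor}) is a convex program with no duality gap as soon as an interior point of the intersection of the $|\mathcal{C}_j|$ disks exists, which is guaranteed whenever the association step returned a feasible assignment.

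Next I would form the Lagrangian $L(\boldsymbol{s},\boldsymbol{\lambda})=\boldsymbol{s}^T\boldsymbol{P}(\boldsymbol{\lambda})\boldsymbol{s}+2\boldsymbol{Q}(\boldsymbol{\lambda})^T\boldsymbol{s}+r(\boldsymbol{\lambda})$ with multiplier vector $\boldsymbol{\lambda}\ge 0$, where $\boldsymbol{P}(\boldsymbol{\lambda})$, $\boldsymbol{Q}(\boldsymbol{\lambda})$ and $r(\boldsymbol{\lambda})$ are exactly the affine-in-$\boldsymbol{\lambda}$ combinations named in the statement. The decisive simplification is that $\boldsymbol{P}(\boldsymbol{\lambda})=\big(|\mathcal{C}_j|+\sum_i\lambda_i\big)\boldsymbol{I}_2\succ 0$ for every $\boldsymbol{\lambda}\ge 0$, so for fixed $\boldsymbol{\lambda}$ the Lagrangian is a strictly convex unconstrained quadratic in $\boldsymbol{s}$; solving $\nabla_{\boldsymbol{s}}L=2\boldsymbol{P}(\boldsymbol{\lambda})\boldsymbol{s}+2\boldsymbol{Q}(\boldsymbol{\lambda})=\boldsymbol{0}$ gives the unique inner minimizer $\boldsymbol{s}^*=-\boldsymbol{P}(\boldsymbol{\lambda})^{-1}\boldsymbol{Q}(\boldsymbol{\lambda})$, which is precisely the claimed form of the solution. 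Substituting $\boldsymbol{s}^*$ back into $L$ collapses the quadratic term and yields the dual function $g(\boldsymbol{\lambda})=r(\boldsymbol{\lambda})-\boldsymbol{Q}(\boldsymbol{\lambda})^T\boldsymbol{P}(\boldsymbol{\lambda})^{-1}\boldsymbol{Q}(\boldsymbol{\lambda})$, i.e.\ the concave objective to be maximized over $\boldsymbol{\lambda}\ge 0$ in the statement.

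I would then settle concavity and optimality. Concavity of $g$ is immediate, since for each fixed $\boldsymbol{s}$ the map $\boldsymbol{\lambda}\mapsto L(\boldsymbol{s},\boldsymbol{\lambda})$ is affine, so $g(\boldsymbol{\lambda})=\inf_{\boldsymbol{s}}L(\boldsymbol{s},\boldsymbol{\lambda})$ is a pointwise infimum of affine functions and hence concave on $\{\boldsymbol{\lambda}\ge 0\}$; equivalently, the matrix-fractional term $\boldsymbol{Q}(\boldsymbol{\lambda})^T\boldsymbol{P}(\boldsymbol{\lambda})^{-1}\boldsymbol{Q}(\boldsymbol{\lambda})$ is convex in $\boldsymbol{\lambda}$ because both arguments are affine and $\boldsymbol{P}(\boldsymbol{\lambda})\succ 0$. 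Letting $\boldsymbol{\lambda}^\star$ maximize $g$, strong duality together with the KKT conditions then certifies that $\boldsymbol{s}^*=-\boldsymbol{P}(\boldsymbol{\lambda}^\star)^{-1}\boldsymbol{Q}(\boldsymbol{\lambda}^\star)$ is the global minimizer of (\ref{theor}), which proves the theorem.

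The main obstacle I anticipate is not the algebra but the duality justification and the primal recovery: one must verify that the disk-intersection feasible set has nonempty interior so that Slater's condition, and hence zero duality gap, holds, and must check that the point $\boldsymbol{s}^*$ built from the \emph{optimal} $\boldsymbol{\lambda}^\star$ is genuinely primal-feasible and complementary-slack rather than merely dual-stationary. Everything else — the bookkeeping that fixes $\boldsymbol{P}_o,\boldsymbol{Q}_o,r_o,\boldsymbol{P}_i,\boldsymbol{Q}_i,r_i$ and the invertibility of $\boldsymbol{P}(\boldsymbol{\lambda})$ for all $\boldsymbol{\lambda}\ge 0$ — is routine once the convex-QCQP structure has been identified.
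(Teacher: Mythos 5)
Your proposal takes essentially the same route as the paper's proof: cast (\ref{theor}) as a convex QCQP, form the Lagrangian, minimize it in closed form over $\boldsymbol{s}$ to get $\boldsymbol{s}^* = -\boldsymbol{P}(\boldsymbol{\lambda})^{-1}\boldsymbol{Q}(\boldsymbol{\lambda})$, and maximize the resulting concave dual over $\boldsymbol{\lambda}\ge 0$, with your extra care about Slater's condition and primal recovery only tightening what the paper leaves implicit. One remark in your favor: your back-substitution is the correct one, since under the paper's scaling the dual function is $-\frac{1}{2}\boldsymbol{Q}(\boldsymbol{\lambda})^T\boldsymbol{P}(\boldsymbol{\lambda})^{-1}\boldsymbol{Q}(\boldsymbol{\lambda})+r(\boldsymbol{\lambda})$, which matches your $r(\boldsymbol{\lambda})-\boldsymbol{Q}(\boldsymbol{\lambda})^T\boldsymbol{P}(\boldsymbol{\lambda})^{-1}\boldsymbol{Q}(\boldsymbol{\lambda})$ up to the factor-of-two convention, whereas the paper's stated dual $+\frac{1}{2}\boldsymbol{Q}(\boldsymbol{\lambda})^T\boldsymbol{P}(\boldsymbol{\lambda})^{-1}\boldsymbol{Q}(\boldsymbol{\lambda})+r(\boldsymbol{\lambda})$ carries a sign slip.
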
    
\begin{proof}
	As we can see from (\ref{theor}), the optimization problem is a  quadratically constrained quadratic program (QCQP) whose general form is given by \cite{boyd}:
	\begin{align}\label{QCQP}
	&\mathop {\min }\limits_{\boldsymbol{s}} \,{\rm{ }}\frac{1}{2}\boldsymbol{s}^T\boldsymbol{P}_o\boldsymbol{s} + \boldsymbol{Q}_o^T\boldsymbol{s} + {r_o},\\
	&\text{s.t.}\,\,\frac{1}{2}\boldsymbol{s}^T\boldsymbol{P}_i\boldsymbol{s} + \boldsymbol{Q}_i^T\boldsymbol{s} + {r_i},\,\,\,i \in {\mathcal{C}_j}.
	\end{align}
	Given (\ref{theor}) and (\ref{consupdate}), we have:\vspace{0.2cm}
	
	$\boldsymbol{{P}}_o = \left[ {\begin{array}{*{20}{c}}
		{2|{\mathcal{C}_j}|}&0\\
		0&{2|{\mathcal{C}_j}|}
		\end{array}} \right]$, $\boldsymbol{{P}}_i = \left[ {\begin{array}{*{20}{c}}
		2&0\\
		0&2 
		\end{array}} \right]$,  $\boldsymbol{Q}_o={\left[ {\begin{array}{*{20}{c}}
			{ - 2\sum\limits_{i = 1}^{|{\mathcal{C}_j}|} {{x_i}} }&{ - 2\sum\limits_{i = 1}^{|{\mathcal{C}_j}|} {{y_i}} }
			\end{array}} \right]^T}$, $\boldsymbol{{Q}}_i = {\left[ {\begin{array}{*{20}{c}}
			{ - 2{x_i}}&{ - 2{y_i}}
			\end{array}} \right]^T}$. Also, ${r_o} = \sum\limits_{i = 1}^{|{\mathcal{C}_j}|} {x_i^2}  + \sum\limits_{i = 1}^{|{\mathcal{C}_j}|} {y_i^2}$, and ${r_i} = x_i^2 + y_i^2+h_j^2-\epsilon^2$ with $\varepsilon  = \{ d|K_o^2{\sigma ^2}\gamma \left( {{\eta _1}P_\textrm{LoS} + {\eta _2}P_\textrm{NLoS}} \right)d^2 = {P_{\max }}\}$. Note that, $\boldsymbol{P}_o$ and $\boldsymbol{P}_i$ are positive semidefinite matrices, and, hence, the QCQP problem in (\ref{QCQP}) is convex. Now, we write the Lagrange dual function as:\vspace{-0.1cm}
	\begin{align}
	f(\boldsymbol{\lambda} ) &= \mathop \text{inf}\limits_{\boldsymbol s} \biggl[\frac{1}{2}\boldsymbol{s}^T\boldsymbol{P}_o\boldsymbol{s} + \boldsymbol{Q}_o^T\boldsymbol{s} + {r_o}+ \sum\limits_i {{\lambda _i}\left( {\frac{1}{2}\boldsymbol{s}^T\boldsymbol{P}_i\boldsymbol{s} + \boldsymbol{Q}_i^T\boldsymbol{s} + {r_i}} \right)}\biggr]\nonumber \\
	&= \mathop \text{inf}\limits_{\boldsymbol s} \left[ {\frac{1}{2}\boldsymbol{s}^T\boldsymbol{P(\lambda )s} + \boldsymbol{{Q{{(\lambda )}}}}^T\boldsymbol{s} + r(\boldsymbol{\lambda} )} \right].
	\end{align}  
	Clearly, by taking the gradient of the function inside the infimum with respect to $s$, we find $\boldsymbol{{s}}^* =  - \boldsymbol{P{(\lambda )}}^{ - 1}\boldsymbol{Q(\lambda )}$. As a result, using $\boldsymbol{{s^*}}$, $f(\boldsymbol{\lambda} ) = \frac{1}{2}\boldsymbol{Q}\boldsymbol{(\lambda )}^T\boldsymbol{P{(\lambda )}}^{ - 1}\boldsymbol{Q(\lambda )} + {r(\boldsymbol{\lambda} )}$. Finally, the dual of problem (\ref{QCQP}) or (\ref{theor}) will be:\vspace{-0.3cm}
	\begin{align}
	\text{max}\,\, f(\boldsymbol{\lambda}), \,\,\textnormal {s.t.} \,\,\,\boldsymbol{\lambda}  \ge 0,\vspace{-0.2cm}
	\end{align}
	which proves Theorem 1.
\end{proof}
\begin{MyColorPar}{black} Using Theorem 1, for a fixed altitude, we find the optimal 2D coordinates of the UAV, $\boldsymbol{s}^* =({x^\textrm{uav}_j}^*,{y^\textrm{uav}_j}^*)$. Then, the optimal UAV's altitude is the argument that minimizes the following one-dimensional function as:\vspace{-0.5cm}
\begin{equation}
h_j^* = \mathop {\arg \min }\limits_{{h_j}} \Big[ {{\alpha _1}\left( {h_j^2 + {{\left\| {\boldsymbol{s}^*} \right\|}^2}} \right) + {\alpha _2}} \Big]. \label{hj} \vspace{-0.1cm}
\end{equation}
where $\alpha_1$ and $\alpha_2$ are the altitude dependent coefficients given in (\ref{apr}). Note that, given (\ref{hj}), the \textcolor{black}{sub-optimal} altitude of the UAV is obtained via one dimensional search over a feasible range of altitudes. Consequently, we can determine the optimal 3D location of each UAV.
\end{MyColorPar}


Note that, the device association (presented in Section III), and  UAVs' locations optimization (in Section IV) are applied iteratively until there is no change in the  location update step. Clearly, at each iteration, the total transmit power of the devices is reduced and the objective function is monotonically decreasing. Hence, the solution converges after several iterations. Note that, our proposed approach provides a suboptimal solution to the original problem. Nevertheless, our solution has a reasonable accuracy but significantly fast compared to the global optimal solution that can be achieved by the brute-force search, as will be further corroborated in the simulations.  

Thus far, we considered the IoT network at one snapshot in the time duration $[0,T]$.  Next, we analyze the IoT network considering the entire time duration $[0,T]$ in which the set of active devices changes. In this case, to maintain the power-efficient and reliable uplink communications of the devices,  the UAVs must update their locations at different update times $t_n$.\vspace{-0.2cm}

\section{Update Times and Mobility of UAVs} 
Here, we find the optimal update time and trajectory of the UAVs to guarantee the reliable uplink transmissions of the IoT devices. Clearly, the trajectory of the UAVs, as well their update time depend on the activation process of the IoT devices. Furthermore, to move along the optimal trajectories, the UAVs must spend a minimum total energy on mobility so as to remain operational for a longer time. In the considered ground IoT network, the set of active IoT devices changes over time. Consequently, the UAVs must frequently update their locations accordingly. Note that, the UAVs do not continuously move as they must stop, serve the devices, and then update their locations. Furthermore, the mobility of the UAVs is also limited due to their energy constraints. Hence, the UAVs update their locations only at some specific times. In this case, during time interval $[0,T]$, we need to find update times $t_n$, $1 \le n \le N$ with $N$ updates, and a framework for optimizing the mobility of the UAVs at different update times. \textcolor{black}{Here, for tractability, we assume that the devices are synchronized at $t=0$. In this case, the synchronization process needs to be done only once during the entire activation period $[0,T]$. Note that, our optimization problems for  jointly finding the optimal UAVs' locations, the device association, and devices' transmit power at each update time do not depend on this synchronization assumption.\vspace{-0.3cm}}   

\subsection{Update Time Analysis}
First, we propose a framework to find the update times of the UAVs. 
As discussed in Section II, each UAV's trajectory consists of multiple stop locations (determined in update times) at which each UAV serves its associated ground devices. Clearly, the update times depend on the activation of the IoT devices during the given time period $[0,T]$. Indeed, the number of update times, $N$, impacts the optimal location and trajectory of the UAVs as well as the power consumption of the IoT devices. A higher number of updates leads to a shorter time interval between the consecutive updates. Hence, a lower number of devices will be active during the shorter time interval. In such a case, the active devices experience lower interference from each other while transmitting their data to the UAVs. Therefore, the IoT devices can use lower transmit power to meet their SINR constraint. However, a higher number of updates requires more mobility and higher energy consumption for the UAVs. Next, we provide insightful analysis on the update time based on the probabilistic and periodic activation models of the IoT devices.\vspace{0.01cm}

\subsubsection{Periodic IoT activation}  
In some applications such as weather monitoring, smart grids (e.g. smart meters), and home automation, the IoT devices can report their data periodically. Therefore, the devices are activated periodically. Let $\tau_i$ be the activation period of  device $i$ during  $[0,T]$.  
Without loss of generality, assume ${\tau _1} \le {\tau _2} \le ... \le {\tau _L}$. Due to the periodic nature of devices' activation, we can find the exact number of active devices at each update time $t_n$.\vspace{-0.1cm}

\begin{proposition} \normalfont{ 
		The exact number of active IoT devices at update time $t_n$ is given by:
		\begin{align}
		&{b_n} = \sum\limits_{i = 1}^L {\mathds{1}\left( {\left\lfloor {\frac{{{t_n}}^-}{{{\tau _i}}}} \right\rfloor  > \left\lfloor {\frac{{{t_{n - 1}}}}{{{\tau _i}}}} \right\rfloor } \right)}, \,\,\, n > 1,\\
		&{b_1} = \mathop {\arg \max }\limits_i \left\{ {{t_1} > {\tau _i}} \right\},
		\end{align}
		where $\mathds{1}(.)$ is the indicator function which can only be equal to 1 or 0, and ${t_n}^ -  = \mathop {\lim }\limits_{\varepsilon  \to {0^ + }} ({t_n} - \varepsilon )$.}
\end{proposition}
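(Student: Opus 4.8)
The plan is to argue device-by-device and then sum. Recall from the system model that a device which becomes active during $[t_{n-1},t_n)$ is served during the following window $[t_n,t_{n+1})$; hence $b_n$ is exactly the number of devices possessing at least one activation instant inside $[t_{n-1},t_n)$. Since device $i$ activates periodically with period $\tau_i$, its activation instants are the positive multiples $\tau_i,2\tau_i,3\tau_i,\ldots$ lying in $[0,T]$, so the whole problem reduces to counting how many of the $L$ devices have such a multiple in the window.

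First I would invoke the elementary identity that the number of positive multiples of $\tau_i$ not exceeding a value $x$ equals $\lfloor x/\tau_i\rfloor$. Applied at the right endpoint, the number of activations of device $i$ strictly below $t_n$ is $\lfloor t_n^-/\tau_i\rfloor$, where the left limit $t_n^-$ is used precisely so that an activation occurring exactly at $t_n$ is excluded (it belongs to the next window $[t_n,t_{n+1})$). Applied at the left endpoint, the number of activations at or before $t_{n-1}$ is $\lfloor t_{n-1}/\tau_i\rfloor$. Subtracting, the number of activation instants of device $i$ falling in the window equals $\lfloor t_n^-/\tau_i\rfloor-\lfloor t_{n-1}/\tau_i\rfloor$.

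Then I would observe that device $i$ is active in the window if and only if this difference is at least one, i.e. if and only if $\lfloor t_n^-/\tau_i\rfloor>\lfloor t_{n-1}/\tau_i\rfloor$, which is exactly the event captured by the indicator $\mathds{1}(\cdot)$ in the statement. Summing this indicator over $i=1,\ldots,L$ yields the claimed formula for $n>1$. For the base case $n=1$ the window is $[0,t_1)$, so device $i$ is active iff its first activation satisfies $\tau_i<t_1$; using the ordering $\tau_1\le\tau_2\le\cdots\le\tau_L$, the set of such devices is an initial segment $\{1,\ldots,m\}$, so their number equals the largest index $i$ with $t_1>\tau_i$, giving $b_1=\arg\max_i\{t_1>\tau_i\}$. (One checks the general formula specializes to this, since for $n=1$ the lower count $\lfloor t_0/\tau_i\rfloor=\lfloor 0\rfloor$ vanishes.)

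The floor-function bookkeeping is routine; the delicate point I would treat carefully is the boundary behavior at the update instants. The left limit $t_n^-$ is essential to assign an activation occurring exactly at $t_n$ to the correct window, and one must confirm that, under the generic condition that activation instants do not coincide with the earlier update time $t_{n-1}$ (reasonable since the update times are free design parameters), the count $\lfloor t_n^-/\tau_i\rfloor-\lfloor t_{n-1}/\tau_i\rfloor$ indeed equals the number of activations in the half-open interval $[t_{n-1},t_n)$. I would also note that passing from this count to the indicator collapses several activations of a single device within one window to a single contribution, which is exactly what the phrase \emph{number of active devices} requires.
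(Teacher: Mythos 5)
Your proposal is correct and follows essentially the same route as the paper's own proof: counting device $i$'s activations up to $t_n^-$ and up to $t_{n-1}$ via the floor function, observing that the device is active in the window exactly when these counts differ, summing the resulting indicator, and handling $b_1$ via $t_0=0$. Your extra care about the boundary case of an activation landing exactly at $t_{n-1}$ (which the paper's half-open interval $t_{n-1}\le q\tau_i < t_n$ technically includes but its floor-difference count excludes) is a refinement the paper glosses over, not a different argument.
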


\begin{proof}
	User $i$ becomes active during $[t_{n-1},t_n)$ if there exists $q\in\mathds{N}$ such that ${t_{n - 1}} \le ~q{\tau _i} ~< ~{t_n}$. Thus, the number of activations of device $i$ before $t_n$ must be greater than the one until $t_{n-1}$. Considering the fact that the number of activations before $t_n$ is $\left\lfloor {\frac{{{t_n}^ - }}{{{\tau _i}}}} \right\rfloor$, and until $t_{n-1}$ is $\left\lfloor {\frac{{{t_{n - 1}}}}{{{\tau _i}}}} \right\rfloor$, we must have:\vspace{-0.2cm}
	\begin{equation}
	{\left\lfloor {\frac{{{t_n}^ - }}{{{\tau _i}}}} \right\rfloor  > \left\lfloor {\frac{{{t_{n - 1}}}}{{{\tau _i}}}} \right\rfloor}.
	\end{equation}
	Hence, the total number of active devices which need to be served at $t_n$ is equal to:
	\begin{equation}
	{b_n} = \sum\limits_{i = 1}^L {\mathds{1}\left( {\left\lfloor {\frac{{{t_n}}^-}{{{\tau _i}}}} \right\rfloor  > \left\lfloor {\frac{{{t_{n - 1}}}}{{{\tau _i}}}} \right\rfloor } \right)}.
	\end{equation} 	
	Finally, considering $t_0=0$, we can write $b_1$ as:	${b_1} = \mathop {\arg \max }\limits_i \left\{ {{t_1} > {\tau _i}} \right\}$. 
\end{proof}
Proposition 2 gives the exact number of devices that must be served by UAVs at each update time. In this case, the update times can be adjusted according to the number of devices that can be served by the UAVs. Indeed, knowing the exact number of active devices enables us to determine the update times in a deterministic and efficient way based on system requirements.

\subsubsection{Probabilistic IoT activation}  
Certain IoT devices can have probabilistic activations in applications such as health monitoring, and smart traffic control. In this case, each IoT device becomes active at time $t\in[0,T]$ following the beta distribution as given in (\ref{Beta}). For this scenario, we will next derive the specific update times as a function of the average number of active devices.\vspace{-0.1cm}

\begin{theorem}\textnormal{
	The update times during which, on the average, a total of $a_n$ devices must be served by the UAVs, are given by: \vspace{-0.2cm}
	\begin{align}
	&{t_n} = T \times {I^{ - 1}}\left( {\frac{{{a_n}}}{L} + {I_{\frac{{{t_{n - 1}}}}{T}}}\left( {\kappa  ,\omega } \right),\kappa ,\omega } \right),\,\, n>1, \vspace{0.3cm} \label{tn}\\
	&{t_1} = T \times {I^{ - 1}}\left( {\frac{{{a_1}}}{L},\kappa ,\omega } \right),
	\end{align}
	where $I_x(.)$ is the regularized incomplete beta function and ${I^{ - 1}}(.)$ is its inverse function. $L$ is the total number of IoT devices, and $[0,T]$ is the time interval \mbox{during which the devices can be active.}}\vspace{-0.2cm}
\end{theorem}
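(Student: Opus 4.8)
The plan is to compute the expected number of devices that become active during the window $[t_{n-1}, t_n)$, equate it to the prescribed average $a_n$, and invert. Since each of the $L$ IoT devices activates at a random time governed by the beta density $f(t)$ of (\ref{Beta}), the probability that any single device activates inside $[t_{n-1}, t_n)$ equals $\int_{t_{n-1}}^{t_n} f(t)\,\mathrm{d}t$. By linearity of expectation (which requires no independence assumption between devices), the average number activating in that window is therefore $a_n = L \int_{t_{n-1}}^{t_n} f(t)\,\mathrm{d}t$.

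The central step is to express this integral through the regularized incomplete beta function. Substituting $u = t/T$ in the integral of (\ref{Beta}) collapses every power of $T$: the factor $T^{\kappa+\omega-1}$ in the denominator cancels against the $T^{\kappa-1}$ and $T^{\omega-1}$ arising from $t^{\kappa-1}$ and $(T-t)^{\omega-1}$, together with the extra $T$ from $\mathrm{d}t = T\,\mathrm{d}u$, leaving $\frac{1}{B(\kappa,\omega)}\int_{t_{n-1}/T}^{t_n/T} u^{\kappa-1}(1-u)^{\omega-1}\,\mathrm{d}u$. By the definition of the regularized incomplete beta function this equals $I_{t_n/T}(\kappa,\omega) - I_{t_{n-1}/T}(\kappa,\omega)$, so that $a_n = L\big[I_{t_n/T}(\kappa,\omega) - I_{t_{n-1}/T}(\kappa,\omega)\big]$.

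It then remains to solve for $t_n$. Isolating the term evaluated at $t_n$ gives $I_{t_n/T}(\kappa,\omega) = \frac{a_n}{L} + I_{t_{n-1}/T}(\kappa,\omega)$, and applying the inverse $I^{-1}(\cdot,\kappa,\omega)$ and multiplying by $T$ produces the stated formula for $n>1$. For the base case I would set $t_0 = 0$ and use $I_0(\kappa,\omega) = 0$, which reduces the recursion to $t_1 = T\,I^{-1}(a_1/L,\kappa,\omega)$.

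I do not anticipate a genuine obstacle; the argument is elementary once the expected-count reading of $a_n$ is fixed. The only points needing a line of justification are that $I_x(\kappa,\omega)$ is strictly increasing in $x$ on $[0,1]$ (its $x$-derivative is the positive beta density $x^{\kappa-1}(1-x)^{\omega-1}/B(\kappa,\omega)$), so that the inverse is well defined, and that the feasibility condition $\frac{a_n}{L} + I_{t_{n-1}/T}(\kappa,\omega) \le 1$ (equivalently, that no more devices are scheduled for service than actually exist) holds, ensuring the argument of $I^{-1}$ lies in $[0,1]$.
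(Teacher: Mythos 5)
Your proof is correct and follows essentially the same route as the paper: compute the per-device activation probability over $[t_{n-1},t_n)$ as $p_n = I_{t_n/T}(\kappa,\omega)-I_{t_{n-1}/T}(\kappa,\omega)$ via the substitution $u=t/T$, set $a_n = Lp_n$, solve for $t_n$ by applying $I^{-1}$, and handle $t_1$ via $I_0(\cdot)=0$. The only cosmetic difference is that you obtain $a_n = Lp_n$ by linearity of expectation while the paper derives it as the mean of a binomial distribution (implicitly assuming independent activations), and your remarks on the strict monotonicity of $I_x(\kappa,\omega)$ and the feasibility condition $\frac{a_n}{L}+I_{t_{n-1}/T}(\kappa,\omega)\le 1$ are justifications the paper leaves implicit.
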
 
\begin{proof}
First, we find the probability that each device becomes active in order to send its data to a UAV at update time $t_n$. As discussed in the system model, a device needs to transmit at time $t_n$ if it becomes active during time $[t_{n-1},t_n)$. Thus, the probability that each device needs to be served at $t_n$ is:
\begin{align}
{p_n} &= \int_{{t_{n - 1}}}^{{t_n}} {\frac{{{t^{\kappa  - 1}}{{(T - t)}^{\omega  - 1}}}}{{{T^{\kappa  + \omega  - 1}}B(\kappa ,\omega )}}\textrm{d}t = } \int_{\frac{{{t_{n - 1}}}}{T}}^{\frac{{{t_n}}}{T}} {\frac{{{t^{\kappa  - 1}}{{(1 - t)}^{\omega  - 1}}}}{{B(\kappa ,\omega )}}\textrm{d}t}, \nonumber\\
& =  \frac{{{B_{\frac{{{t_n}}}{T}}}(\kappa ,\omega ) - {B_{\frac{{{t_{n - 1}}}}{T}}}(\kappa ,\omega )}}{{B(\kappa ,\omega )}} = {I_{\frac{{{t_n}}}{T}}}(\kappa ,\omega ) - {I_{\frac{{{t_{n - 1}}}}{T}}}(\kappa ,\omega ),
\end{align}	
where ${B_x}(\kappa ,\omega ) = \int_0^x {{y^{\kappa  - 1}}{{(1 - y)}^{\omega  - 1}}{\textrm{d}y}}$ is the incomplete beta function with parameters $\kappa$ and $\omega$, and ${I_x}(.)$ is the regularized incomplete beta function.\\
Now, the average number of active devices at $t_n$ is given by:
\begin{align}
{a_n} &= \sum\limits_{k = 1}^L {\binom{L}{k}{p_n}^k{{(1 - {p_n})}^{L - k}}}  = L{p_n}\sum\limits_{k = 1}^L {\frac{{(L - 1)!}}{{(k - 1)!\left( {L - k} \right)!}}{p_n}^{k - 1}{{(1 - {p_n})}^{L - k}}},\nonumber \\
&=\sum\limits_{k' = 0}^{L'} {\frac{{(L')!}}{{(k')!\left( {L' - k'} \right)!}}{p_n}^{k' - 1}{{(1 - {p_n})}^{L' - k'}}}=Lp_n,\label{ave}
\end{align}
where in $(a)$, we used ${L'} = L - 1$ and ${k'} = k - 1$. Note that, (\ref{ave}) corresponds to the mean of a binomial distribution. Then we have:
\begin{equation}
L\left[ {{I_{\frac{{{t_n}}}{T}}}(\kappa ,\omega ) - {I_{\frac{{{t_{n - 1}}}}{T}}}(\kappa ,\omega )} \right] = {a_n}.
\end{equation}
This leads to:
\begin{equation}
{t_n} = T \times {I^{ - 1}}\left( {\frac{{{a_n}}}{L} + {I_{\frac{{{t_{n - 1}}}}{T}}}\left( {\kappa ,\omega } \right),\kappa ,\omega } \right).
\end{equation}
Finally, considering ${I_0}(.) = 0$, we find ${t_1} = T \times {I^{ - 1}}\left( {\frac{{{a_1}}}{L},\kappa ,\omega } \right)$.
\end{proof}
Clearly, the update times need to be determined based on the IoT devices' activation patterns. In fact, $t_n$ depends on the number of IoT devices, and their activation distribution. Furthermore, according to (\ref{tn}), each $t_n$ depends also on the previous update time, $t_{n-1}$. This is due to the fact that, the number of active devices that need to be served at $t_n$, depends on the update time difference $t_n-t_{n-1}$. Using Theorem 2, the update times of the UAVs can be set based on the average number of active devices. Typically, at each update time, the number of devices which need to be served by the UAVs should not be high in order to avoid high interference. However, considering the number of available resources (orthogonal channels and UAVs), it is preferable to serve a maximum number of active devices at each update time. Hence, in this case, the number of active devices at each update time must not be relatively low.
 Therefore, considering system requirements and different  parameters such as mutual interference between devices, acceptable delay for serving the devices, and number of available channels, an appropriate $t_n$ must be adopted. For instance, using Theorem 2, the update times can be set such that the average number of active devices be lower than the number of channels, $R$, to avoid interference between the devices. Next, we investigate the UAVs' mobility during the update times. \vspace{-0.3cm}  
\subsection{UAVs' Mobility}

Thus far, we have determined the update times as well as the stop locations at each update time. Here, we investigate how the UAVs should move between the stop locations at different update times. In this case, considering the energy limitation of the UAVs, $E_\textrm{max}$, we find the optimal trajectory of each UAV to guarantee reliable and energy-efficient uplink transmissions of active IoT devices. The UAVs update their locations according to the activity of the IoT devices. Therefore, the UAVs move from their initial locations at $t_{n-1}$ to a new optimal locations at $t_n$. This mobility should be done in such a way that the UAVs spend a minimum total energy on the mobility so as to remain operational for a longer time. In fact, given the optimal sets of UAVs' locations at  $t_{n-1}$ and $t_n$ obtained from Section III, we determine how to  move the UAVs between the initial and the new sets of locations in order to minimize total mobility of the UAVs. 

  Now, let $\mathcal{I}_{n-1}$ and $\mathcal{I}_n$ be two sets comprising the UAVs' locations at two consecutive update times $t_{n-1}$ and $t_n$. Our goal is to find the optimal mapping between these two sets in a way that the energy used for transportations (between two sets) is minimized. Not that, in our model, the total energy which each UAV can use for the mobility during $[0,T]$ is limited to $E_\textrm{max}$. Clearly, in the multiple updates (mobilities) during $[0,T]$, the maximum energy consumption of each UAV at each update is equal to the remaining energy of the UAV. Let $\Gamma_{n,k}$ be the remaining energy of the UAV at the location having index $k \in \mathcal{I}_{n-1}$  at time $t_n$. Then, we can write the following UAVs' mobility optimization problem: \vspace{-0.3cm}
  \begin{align} \label{transport1}
  &\min\limits_{\boldsymbol{Z}} \sum\limits_{l \in \mathcal{I}_{n}} {\sum\limits_{k \in \mathcal{I}_{n-1}} {{E_{kl}}{Z_{kl}}} }, \\
  \text{s.t.}\,&\sum\limits_{l \in \mathcal{I}_{n}} {{Z_{kl}}}  = 1,\,\, \sum\limits_{k \in \mathcal{I}_{n-1}} {{Z_{kl}}}  = 1,\\
  &{E_{lk}} \le {\Gamma _{n,k}} \,,\,\,\,{Z_{kl}} \in \{ 0,1\}, \,\, \forall k \in \mathcal{I}_{n-1}, \, \forall l \in \mathcal{I}_{n},\label{remain}
  \end{align}
  where $\mathcal{I}_{n-1}$ and $\mathcal{I}_{n}$, are the initial and new sets of UAVs' locations at times $t_{n-1}$ and $t_{n}$. $\boldsymbol{Z}$ is the $|\mathcal{I}_n|\times|\mathcal{I}_n|$ assignment matrix with each element $Z_{kl}$ being 1 if UAV $k$ is assigned to location $l$, and 0 otherwise. $E_{kl}$ is the energy used for moving a UAV from its initial  location with index $k \in \mathcal{I}_{n-1}$ to a new location with index $l \in \mathcal{I}_n$. Also, $\Gamma_{n,k}$ is the remaining energy for the UAVs at time $t_n$. Note that, (\ref{remain}) guarantees that UAVs remain operational until the end of the period $T$. 
	The total energy consumption of the rotary wing UAV while moving between two stop locations can be computed as done in \cite{di}:	
	\begin{equation}
	E = \frac{D}{v}\left( {{P_V} + {P_H}} \right),\label{E}
	\end{equation}	
	where $D$ is the distance between two stop locations, $D/v$ is the flight duration, $P_V$ is the power consumption for vertical movement, and $P_H$ is the power consumption for horizontal movement. Clearly, if the altitude difference between two stop locations is $\Delta h$, the effective vertical and horizontal velocities will be ${v_v} = v\sin \phi$ and ${v_h} = v\cos \phi$, with $\phi  = {\sin ^{ - 1}}\left( {\frac{{\Delta h}}{D}} \right)$.
\begin{MyColorPar}{black} 	
	According to \cite{zeng2016energy} and \cite{filippo}, $P_H$ is composed of parasitic power and induced power needed for overcoming the parasitic drag and the lift-induced drag. The parasitic power, based on \cite[equations (13.32), (13.27), and (11.3)] {filippo}, can be  given by:
	\begin{equation}
	{P_P} = \frac{1}{2}\rho {C_{D_o}}{A_e}{v_h^3} + \frac{\pi }{4}{N_b}c_b\rho {C_{{D_o}}}{\omega ^3}{R^4}\left( {1 + 3{{\left( {\frac{{{v_h}}}{{\omega R}}} \right)}^2}} \right), \label{PP}
	\end{equation}
	where $v_h$ is the effective horizontal velocity, $C_{D_o}$ is the drag coefficient, $\rho$ is the air density, $c_b$ is the blade chord, $N_b$ is the number of blades,  and $A_e$ is the reference area (frontal area of the UAV) \cite{zeng2016energy} and \cite{filippo}. We note that the second term in (\ref{PP}) represents the blade power profile.
	
	Using \cite [equations (13.19), (13.13), and (12.2)] {filippo}, the induced power (assuming zero tilt angle) can be computed by:\vspace{-0.3cm}
	\begin{equation}
	{P_I} = \omega RW \times \lambda, \label{PI}
	\end{equation}
	where $R$ is the rotor disk radius, $W$ is the weight of the UAV, and $\omega$ is the angular velocity. Also, given \cite[(13.18), (13.13), and (12.1)] {filippo}, we can find $\lambda$ by solving the following equation:
	\begin{equation}
	g\left( \lambda  \right) = 2\rho \pi \omega ^2 {R^4}\lambda \sqrt {\frac{{{v_h^2}}}{{{\omega ^2}{R^2}}} + {\lambda ^2}}  - W = 0. \label{lam}
	\end{equation}
	
	The power consumption due to the vertical climbing and descending (assuming rapid descent) can be given by 
	\cite[equations (12.35), (12.47), (12.50)]{filippo}:
	\begin{equation}
	{P_V} = 
	\begin{cases}\label{PV} 
	
	\frac{W}{2}{v_v} + \frac{W}{2}\sqrt {{v_v}^2 + \frac{{2W}}{{\rho \pi {R^2}}}},\,\,\, \textrm{climbing},	\\
	\frac{W}{2}{v_v} - \frac{W}{2}\sqrt {{v_v}^2 - \frac{{2W}}{{\rho \pi {R^2}}}},\,\,\, \textrm{\textrm{descending (in windmill state)}},
	\end{cases}	
	\end{equation}
	where $v_v$ is the effective vertical velocity.
	Finally, the total mobility energy consumption is computed using (\ref{E})-(\ref{PV}).
	\end{MyColorPar}
	
	 Clearly, the optimization problem in (\ref{transport1}) is an integer  linear programming (ILP). Following the similar approach we used for solving (\ref{assign}), we transform problem (\ref{transport1}) to a standard assignment problem which can be solved using the Hungarian method in a polynomial time with a complexity of $O(|\mathcal{I}_n|^3)$. To this end, we need to remove constraint (\ref{remain}) by considering $E_{lk}=+\infty$ when the constraint is not satisfied. To determine when (\ref{remain}) is not satisfied, we use $\mathcal{I}_{n-1}$ and $\mathcal{I}_{n}$ to compute $E_{lk}$, and compare it with the remaining energy of the UAVs, $\Gamma _{n,k}$. Then, in the objective function (\ref{transport1}), we replace each $E_{lk}$ corresponding to the unsatisfied constraint with $E_{lk}=+\infty$. Consequently, (\ref{transport1}) is transformed into a standard assignment problem. 
  The result of solving (\ref{transport1})  will be the assignment matrix, $\boldsymbol{Z}$, that optimally assigns the UAVs to the destinations. Therefore, the locations of the UAVs are updated according to the new destinations. ،Then, having the destinations of each UAV at different update times, \mbox{we can find the optimal trajectory of the UAVs.\vspace{-0.25cm}}  

\section{Simulation Results and Analysis}
For our simulations, the IoT devices are deployed within a geographical area of size $1\,\text{km}\times 1 \,\text{km}$. In this case, we consider a total number of 500 IoT devices which are uniformly distributed on the area. Furthermore, we consider UAV-based communications in an urban environment with $\psi=11.95$ and $\beta=0.14$ at 2\,GHz carrier frequency \cite{HouraniOptimal}. 
 Table I lists the simulation parameters. Here, we analyze the transmit power of the IoT devices, the energy consumption of UAVs on their mobility, and the update times. In our update time analysis, unless otherwise stated, we consider the probabilistic activation model for the IoT devices with the beta distribution  parameters $\kappa=3$, and $\omega=4$ \cite{3GPP}. When applicable, we compare our results with pre-deployed stationary aerial base stations (UAVs) scenario while adopting the optimal device association and power control technique of Subsection III-A. In the stationary case, the locations of UAVs are assumed to be fixed over the target area and they are not updated according to the devices' locations. All statistical results are averaged over a large number of independent runs.

\begin{table}[!t]
	\normalsize
	\begin{center}
		\caption{\small Simulation parameters.}
		\vspace{-0.4cm}
		\label{TableP}
		\resizebox{8cm}{!}{
			\begin{tabular}{|c|c|c|}
				\hline
				\textbf{Parameter} & \textbf{Description} & \textbf{Value} \\ \hline \hline
				$P_\textrm{max}$	&    Maximum transmit power of each device      &  $200\,\textrm{mW}$     \\ \hline
				$\alpha$	&     Path loss exponent for LoS links     &        2   \\ \hline
				$\sigma^2$	&     Noise power     &        -130\,dBm  \\ \hline
				$\gamma$	&     SINR threshold      &   5\,dB \\ \hline
				$L$	&     Total number of IoT devices     &   500 \\ \hline
				
				$\eta_1$	&     Additional path loss to free space for LoS     &   3\,dB \\ \hline
				
				$\eta_2$	&     Additional path loss to free space for NLoS      &   23\,dB \\ \hline

			\end{tabular}}
			
		\end{center}\vspace{-1.2cm}
	\end{table}


\begin{figure}[!t]
	\begin{center}
		\vspace{-0.2cm}
		\includegraphics[width=6.8cm]{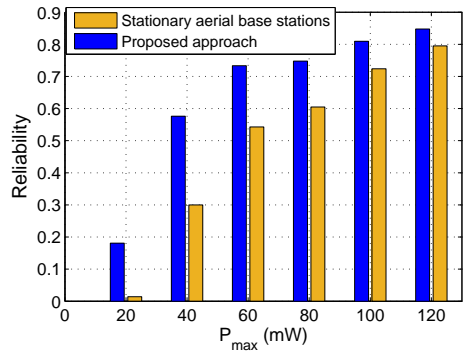}
		\vspace{-0.90cm}
		\caption{ \small Reliability comparison between the proposed approach and stationary aerial base stations using 5 UAVs.\vspace{-1.25cm}}
		\label{Reliability}
	\end{center}
\end{figure}

Note that, in the given IoT network, serving all the active devices may not be possible due to the limitations on the number of UAVs and the maximum transmit power of the devices. Thus, in Fig. \ref{Reliability}, we show the achieved system \textit{reliability} which, here, is defined as the probability that all the active devices can be served by the UAVs. Clearly, the reliability depends on the locations and transmit powers of the devices as well as the number of UAVs.

Fig. \ref{Reliability} shows the reliability as the maximum transmit power of the devices, $P_\textrm{max}$, varies. In this case, 5 UAVs are deployed to serve 100 active IoT devices. Clearly, as $P_\textrm{max}$ increases, the reliability also increases. In fact, for higher $P_\textrm{max}$ values, the devices have higher a chance to successfully connect to UAVs. From Fig. \ref{Reliability}, we can see that, our proposed approach leads to a significantly improved reliability compared to the case in which stationary aerial base stations are used. In particular, the difference between the reliability of the stationary case and our proposed approach is significant for lower $P_\textrm{max}$. Indeed, a higher reliability is achieved by dynamically optimizing the UAVs' locations based on the locations of the IoT devices. As shown in Fig. \ref{Reliability}, by increasing $P_\textrm{max}$ from 40\,mW to 100\,W, the reliability increases from 0.3 to 0.72 for the stationary case, while it increases from 0.58 to 0.82 in our proposed approach. Furthermore, the proposed approach yields a maximum of 28\% improvement in the system reliability. 

Fig. \ref{Locations} shows a snapshot of the UAVs' locations and their associated IoT devices (indicated by the same color) resulting from the proposed approach. In this figure, 5 UAVs are efficiently deployed to serve 100 active IoT devices  which are uniformly distributed on the area. In this case, all the devices are able to send their data to the associated UAVs by using a minimum total transmit power. Here, the 3D locations of the UAVs as well as the device association are determined based on the locations of the ground IoT devices and their transmit power. 

\begin{figure}[!t]
	\begin{center}
		\vspace{-0.2cm}
		\includegraphics[width=7.5cm]{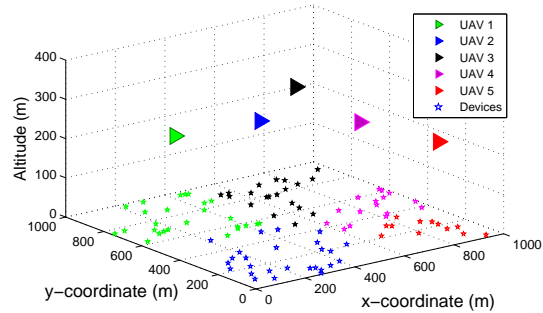}
		\vspace{-0.7cm}
		\caption{ \small UAVs' locations and associations for one illustrative snapshot.}\vspace{-1.1cm}
		\label{Locations}
	\end{center}
\end{figure}

In Fig. \ref{Pt_NumberofUAVs_new}, we show the total transmit power needed by the IoT devices for reliable uplink communications, versus the number of UAVs in the interference scenario. Clearly, the total transmit power of the IoT devices can be reduced by deploying more UAVs. For instance, considering 100 active devices and 20 available channels, using our proposed approach, the total transmit power decreases from 2.4\,W to 0.2\,W by increasing the number UAVs from 5 to 10. Furthermore, using the proposed approach, the total transmit power of the devices decreases by 45\% (on the average) compared to the stationary case. Clearly, for a lower number of UAVs, the proposed approach leads to higher power reduction compare to the stationary case. In other words, intelligently optimizing the locations of UAVs provides more power reduction gains when the number of UAVs is low. In fact,  for very dense networks with a high number of UAVs, updating the UAVs' locations is obviously no longer necessary compared to a case with a low number of UAVs. 
For instance, as we can see from Fig. \ref{Pt_NumberofUAVs_new}, the power reduction gain achieved by deploying 5 UAVs is around 7 times larger than the case with 10 UAVs.


\begin{figure}[!t]
	\begin{minipage}{.49\textwidth}
		\begin{center}
			\vspace{-0.2cm}
			\includegraphics[width=6.8cm]{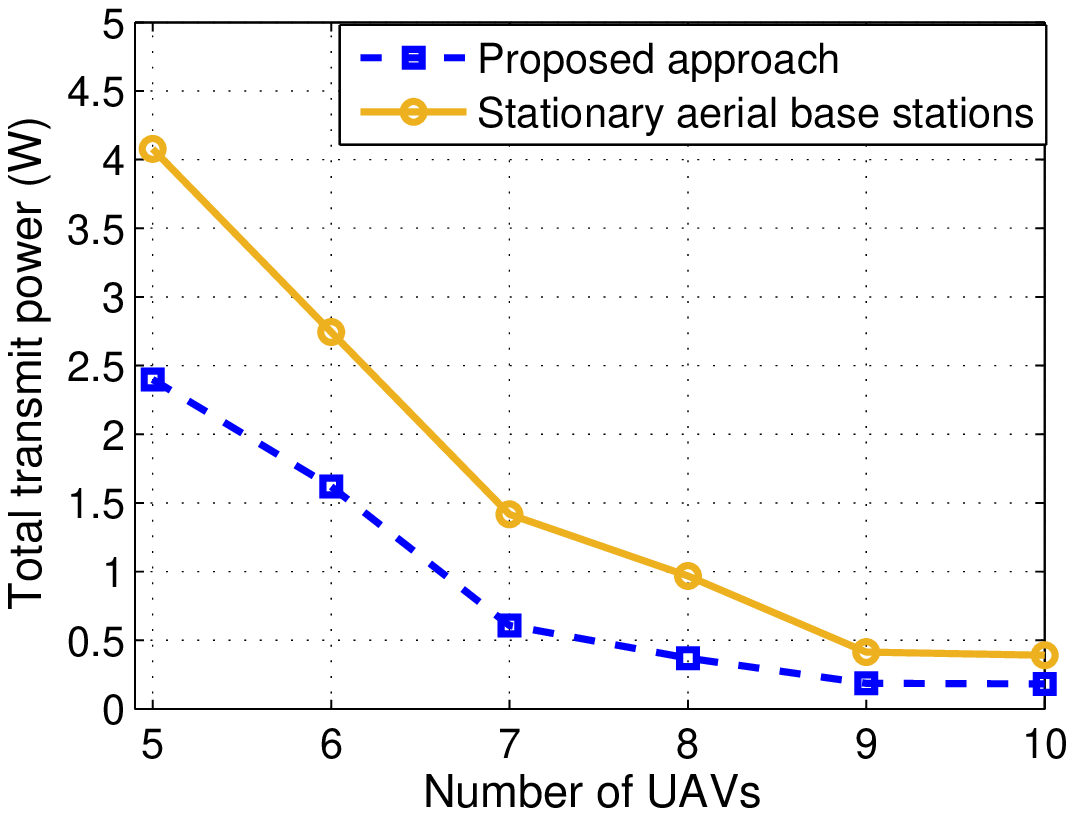}
			\vspace{-0.5cm}
			\caption{ \small Total transmit power of devices vs. number of\vspace{-0.1cm}\\ UAVs in the presence of interference.}\vspace{-1cm}
			\label{Pt_NumberofUAVs_new}
		\end{center}
	\end{minipage}
	\hfill
	\begin{minipage}{.49\textwidth}
			\begin{center}
				\vspace{-0.01cm}
				\includegraphics[width=6.8cm]{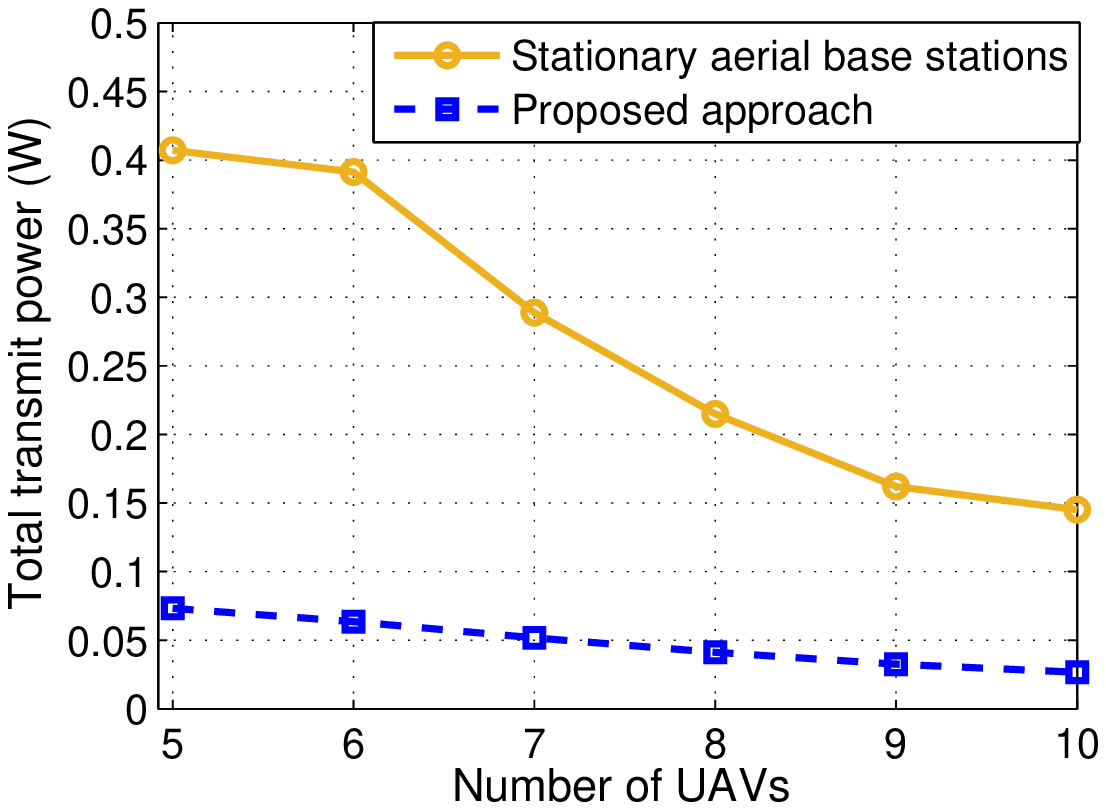}
				\vspace{-0.5cm}
				\caption{ \small Total transmit power of devices vs. number of \vspace{-0.1cm}\\UAVs in the interference-free scenario.}\vspace{-1cm}
				\label{NoInterference_Pt_vs_NumberofUAVs}
			\end{center}
	\end{minipage}
\end{figure}

Fig. \ref{NoInterference_Pt_vs_NumberofUAVs} shows the total transmit power of the IoT devices as a function of the number of UAVs in an interference-free scenario. Compared to the interference scenario, the devices can obviously use a lower transmit power for sending their data to the UAVs. For instance, by efficiently deploying only 5 UAVs, the devices can establish reliable uplink communications with a total transmit power of 70\,mW. Furthermore, Fig. \ref{NoInterference_Pt_vs_NumberofUAVs} shows that, our proposed approach leads to an average of 80\% power reduction compared to the stationary case.


Fig. \ref{Pt_resourceBlocks} shows the total transmit power of devices used for meeting the SINR requirement as the number of available channels varies. The result in Fig. \ref{Pt_resourceBlocks} corresponds to a case with 100 active devices which are served by 5 UAVs. Clearly, the total transmit power decreases as the number of channels increases. This is due to the fact that, when more orthogonal resources are available, the interference between the devices will decrease. As a result, each device can reduce its transmit power while connecting to the serving UAV. From Fig. \ref{Pt_resourceBlocks}, we can see that, by increasing the number of channels from 25 to 50, the total transmit power of devices can be reduced by 68\% in the proposed approach. In fact, the average number of interfering devices decreases from 4 to 2 when we increase the number of channels from 25 to 50. Consequently, less interference is generated by the devices while transmitting to the UAVs.  

\begin{figure}[!t]
	\begin{center}
		\vspace{-0.2cm}
		\includegraphics[width=6.8cm]{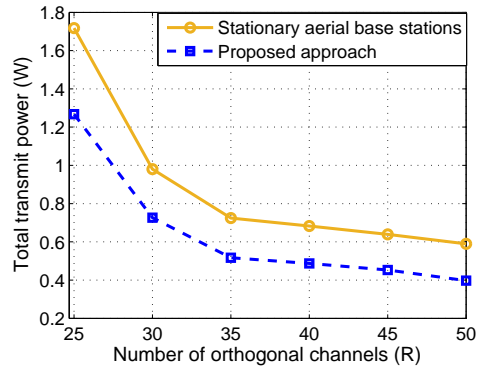}
		\vspace{-0.6cm}
		\caption{ \small Total transmit power of devices vs. number of orthogonal channels.}
		\label{Pt_resourceBlocks}\vspace{-1.4cm}
	\end{center}
\end{figure} 

\textcolor{black}{In Fig. \ref{NumberOfActiveDevices}, we show the average number of active devices that must be served by UAVs at different update times $t_n$ which are normalized by $T$.} \textcolor{black}{Clearly, the number of active devices at each update time depends the activation
process of the devices and the number of update times that indicates how frequently the UAVs
serve the devices. In Fig. \ref{NumberOfActiveDevices}, due to the beta distribution-based
activation pattern of the IoT devices, the number of active devices decreases when $t_n$ exceeds
0.5 for $N=10$.} 
 From Fig. \ref{NumberOfActiveDevices}, we can see that, for a higher number of update times or equivalently shorter time period between consecutive updates, the average number of devices that need to transmit their data decreases. For instance, considering $t_n=0.6$, the average number of active devices decreases from 180 to 80 when the number of updates increases from 5 to 10. We also note that, while a lower number of active devices leads to a lower interference between the devices, it requires more updates and mobility for the UAVs. Fig. \ref{NumberOfActiveDevices} also verifies that the analytical results in Theorem 2 match the simulations. Furthermore, in Fig. \ref{Periodic}, we show the exact number of active devices for the \emph{periodic activation case} obtained from Proposition 2. In this case, each device becomes active with a certain activation period, $\tau_i$. As expected, for a higher number of updates, a lower  number of active devices will need to be served by the UAVs. For instance, by increasing the number of updates from 10 to 30, on the average, the number of active devices decreases by 58\%. \textcolor{black}{Moreover, Fig. \ref{Periodic} shows that the maximum number of active devices for 10 updates is about two times larger than the case with 30 updates. Therefore, in order to avoid the interference between the devices, the number of orthogonal channels must be increased by a two-fold factor when the number of updates decreases from 30 to 10.} 
\begin{figure}[!t]
	\begin{center}
		\vspace{-0.2cm}
		\includegraphics[width=7.3cm]{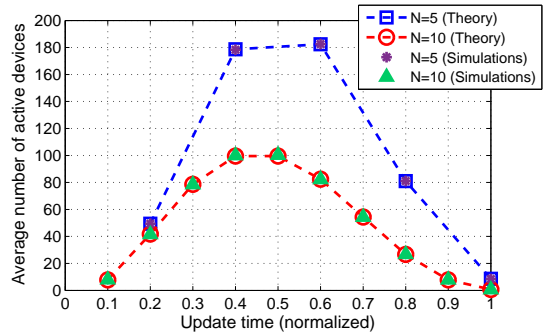}
		\vspace{-0.8cm}
		\caption{ \small Average number of active devices at update times for the probabilistic activation.\vspace{-1.0cm}}
		\label{NumberOfActiveDevices}
	\end{center}
\end{figure} 

\begin{figure}[!t]
	\begin{center}
		\vspace{-0.2cm}
		\includegraphics[width=6.4cm]{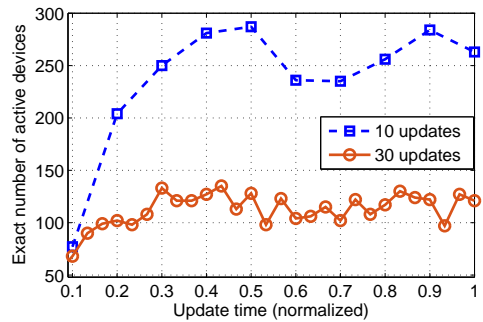}
		\vspace{-1.05cm}
		\caption{ \small Exact number of active devices at different update times for the periodic activation.\vspace{-1.3cm}}
		\label{Periodic}
	\end{center}
\end{figure} 

Fig. \ref{UpdateTime} presents a direct result of Theorem 2 that computes the update times based on the average number of active devices. Fig. \ref{UpdateTime} shows how to set update times in order to ensure that the number of devices (which needs to be served) at each update time does not exceed a specified number, $a$. As we can see from Fig. \ref{UpdateTime}, to achieve a lower value of $a$, updates must occur more frequently to reduce the time interval between the consecutive updates. For example, as can be seen from this figure, to meet $a= 100$, 75, and 50, the 5th update must occur at $t_n=0.41$, 0.55, and 1. Moreover, Fig. \ref{UpdateTime} shows that, the number of updates increases as $a$ decreases. For example, in this case, to reduce $a$ from 100 to 50, the number of updates needs to be doubled.

\begin{figure}[!t]
	\begin{center}
		\vspace{-0.2cm}
		\includegraphics[width=6.8cm]{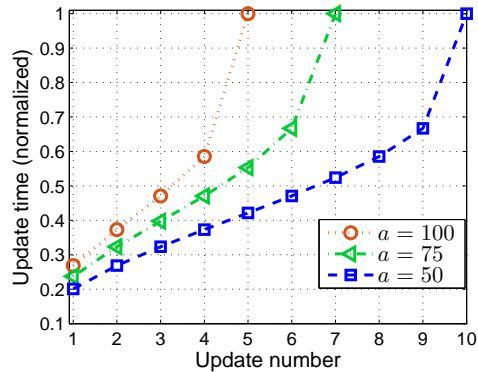}
		\vspace{-0.5cm}
		\caption{ \small Update times for different average number of active devices.}\vspace{-1.2cm}
		\label{UpdateTime}
	\end{center}
\end{figure}

%

\textcolor{black}{Fig. \ref{Energy_update} shows the impact of the number of updates on the amount of energy that the UAVs use to move. For our simulations, we have considered $v=10$\,m/s, $\rho=1.225$\,kg/$\textrm{m}^{-3}$, $\omega=20$\,rad/s, $R=0.5$\,m, $c_b=10$\,cm, $N_b=4$, and $W=50$\,N  \cite{filippo}. Intuitively, a higher number of updates requires more mobility of the UAVs. Therefore, by increasing the number of updates, the total energy consumption of the UAVs will also increase. As we can see from Fig.  \ref{Energy_update}, by increasing the number of updates from 3 to 6, the energy consumption of UAVs increases by a factor of 2.1 when the target area size is $1\,\text{km}\times 1 \,\text{km}$. Note that, the mobility of the UAVs also depends on the size of geographical area in which the devices are distributed. Clearly, on average, the UAVs need to move further for covering a larger area.} 

Interestingly, there is an inherent tradeoff between the number of updates, mobility of the UAVs, and transmit power of the IoT devices. In fact, considering Fig. \ref{Energy_update}, a higher number of updates leads to a higher energy consumption of the UAVs due to the higher mobility. In addition, as shown in Fig. \ref{NumberOfActiveDevices}, as the number of updates increases, a lower number of the IoT devices will be active at each update time, and, hence, there will be lower interference between the devices. As a result, the transmit power of the devices that is needed for satisfying the SINR requirement, can be reduced. Note that, as we showed in Fig. \ref{Pt_resourceBlocks}, the devices' transmit power decreases as the interference decreases (by increasing the number of orthogonal channels). Hence, while a higher number of updates leads to a lower devices' transmit power, it requires more UAVs' mobility.

\begin{figure}[!t]
	\begin{center}
		\vspace{-0.2cm}
		\includegraphics[width=6.5cm]{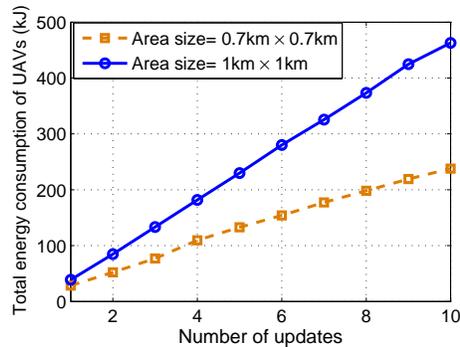}
		\vspace{-0.6cm}
		\caption{ \small Total UAV energy consumption vs. number of updates.}\vspace{-0.8cm}
		\label{Energy_update}
	\end{center}
\end{figure}


Fig. \ref{Convergence} shows the overall convergence of the proposed power minimization algorithm that is used for solving the original problem (\ref{Assosiation1}) considering 5 UAVs. As we can see from the figure, in this case, the total transmit power of the IoT devices converges after 5 iterations. In Fig. \ref{Convergence}, each iteration corresponds to a joint solution to the device association and UAVs' locations optimization problems. Clearly, after several iterations, updating the device association and UAVs' locations will no longer improve the solution.


In Fig. \ref{ComparedwithOptimal}, we show an example to compare the accuracy and time complexity of our proposed approach with the optimal solution obtained by an exhaustive search. \textcolor{black}{Here, to perform an exhaustive search over the continuous space, we have discretized the space with a resolution of 0.1\,m.} In this case, two UAVs are deployed to serve the devices. Clearly, the average gap between the proposed solution and the optimal solution is around 11\%. However, in this example, the proposed solution is around 500 times, on the average, faster than the optimal solution.\vspace{-0.34cm}

\begin{figure}[!t]
	\hspace{1cm}
 \begin{minipage}{.45\textwidth}
  \begin{center}
   \vspace{-0.2cm}
    \includegraphics[width=0.81\textwidth]{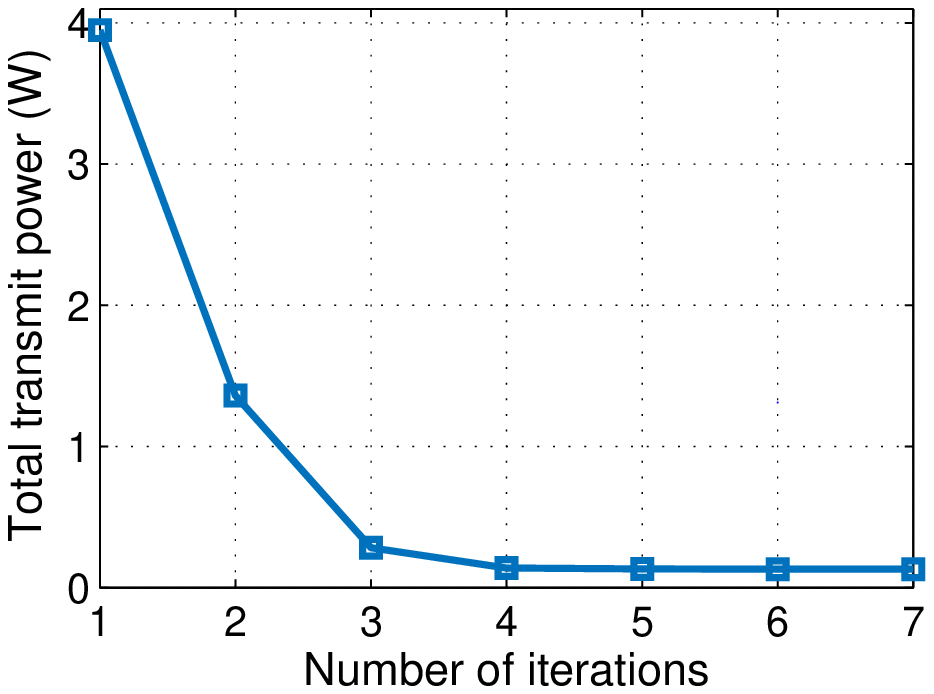}
    \vspace{-0.5cm}
    \caption{ \small Overall convergence of the algorithm.\vspace{0.1cm}}
    \label{Convergence}
  \end{center}\vspace{-1.41cm}
 \end{minipage}
\hspace{-0.cm}
 \begin{minipage}{.45\textwidth}
  \begin{center}
   \vspace{-0.4cm}
    \includegraphics[width=0.81\textwidth]{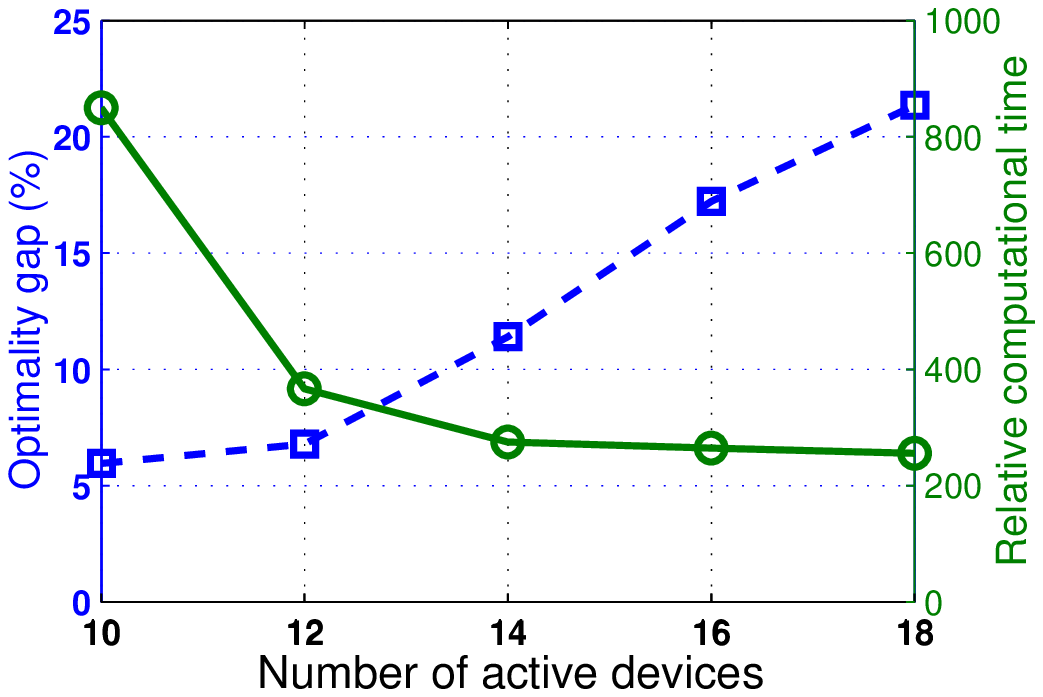}
    \vspace{-0.5cm}
    \caption{ \small Proposed approach vs. optimal solution.\vspace{-0.2cm}}
    \label{ComparedwithOptimal}
  \end{center}\vspace{-1.41cm}
   \end{minipage}
\end{figure}

\section{Conclusion}\vspace{-0.2cm}
In this paper, we have proposed a novel framework for efficiently deploying and moving UAVs to collect data in the uplink from ground IoT devices. In particular, we have determined the jointly optimal UAVs' locations, device association, and uplink power control of the IoT devices such that the total transmit power of the devices under their SINR constraints is minimized. In addition, we have investigated the effective movement of the UAVs to collect the IoT data in a time-varying IoT network. For this case, based on the devices’ activation process, we have derived the update time instances at which the
UAVs must update their locations. 
Furthermore, we have obtained the optimal trajectories that are used by the UAVs to dynamically serve the  IoT devices with a minimum energy consumption.
 The results have shown that by intelligently moving and deploying the UAVs, the total transmit power of the devices significantly decreases compared to the case with pre-deployed stationary aerial base stations. Moreover, there is a fundamental tradeoff between the number of updates, the UAVs' mobility, and the devices' transmit power. \vspace{-0.1cm} 
  
\def\baselinestretch{1.5}
\bibliographystyle{IEEEtran}

\bibliography{references}
\vspace{-0.4cm}
\end{document}